\theoremstyle{plain}
\newtheorem{thm}{Theorem}
\newtheorem{lemma}{Lemma}
\let\notORI\not 
\let\not\notORI
\def\R{\mathbb{R}}
\newcommand{\clomon}{{\sc clomon}} 
\newcommand{\calR}{\mathcal{R}}
\newcommand{\calS}{\mathcal{S}}
\newcommand{\calI}{\mathcal{I}}
\journal{Icarus}
\begin{document}

\title{Completeness of Impact Monitoring}
\author[pi,sds]{Alessio Del Vigna}
\ead{delvigna@mail.dm.unipi.it}
\author[]{Andrea Milani $\dagger$\footnote{$(\dagger)$ During the final revision of this
    paper a tragedy occurred: Prof. Andrea Milani passed away on
    November 28, 2018. The present paper is dedicated to his memory.}}
\author[imcce]{Federica Spoto}
\author[sds]{Andrea Chessa}
\author[inaf,ifac]{Giovanni B. Valsecchi}

\address[pi]{Dipartimento di Matematica, Universit\`a di Pisa, Largo
  Bruno Pontecorvo 5, Pisa, Italy}
\address[sds]{Space Dynamics Services s.r.l., via Mario Giuntini,
  Navacchio di Cascina, Pisa, Italy}
\address[imcce]{IMCCE, Observatoire de Paris, PSL Research University,
  CNRS, Sorbonne Universités, UPMC Univ. Paris 06, Univ. Lille, 77
  av. Denfert-Rochereau F-75014 Paris, France}
\address[inaf]{IAPS-INAF, via Fosso del Cavaliere 100, 00133 Roma,
  Italy}
\address[ifac]{IFAC-CNR, via Madonna del Piano 10, 50019 Sesto
  Fiorentino, Italy}

\begin{small}
    \begin{abstract}
        The completeness limit is a key quantity to measure the
        reliability of an impact monitoring system. It provides the
        impact probability threshold for which every virtual impactor
        (VI) with impact probability above this value has to be
        detected. The completeness limit depends on the confidence
        region sampling: a goal of this paper is to increase the
        completeness without increasing the computational load, thus
        we propose a new method to sample the Line Of Variations (LOV)
        with respect to the previously one used in NEODyS. The
        step-size of the sampling is not uniform in the LOV parameter,
        since the probability of each LOV segment between consecutive
        points is kept constant. Moreover, the sampling interval has
        been extended to the larger interval $[-5,5]$ in the LOV
        parameter and a new decomposition scheme in sub-showers and
        sub-returns is provided to deal with the problem of duplicated
        LOV points appearing in the same return.

        The impact monitoring system \clomon-2 has been upgraded with
        all these new features, resulting in a decrease of the impact
        probability $IP^*$ corresponding to the generic completeness
        limit by a factor $\simeq 4$ and in an increase of the
        computational load by a factor $\simeq 2$. Moreover, since the
        generic completeness limit is an analytic approximation, we
        statistically investigate the completeness actually reached by
        the system. For this we used two different methods: a direct
        comparison with the results of the independent system Sentry
        at JPL and an empirical power-law to model the number of
        virtual impactors as a function of the impact probability. We
        found empirically that the number of detected virtual
        impactors with $IP>IP^*$ appears to grow according to a
        power-law, proportional to $IP^{-2/3}$. An analytical model
        explaining this power-law is currently an open problem, but we
        think it is related to the way the number of virtual impactors
        within a time $t_{rel}$ from the first observed close approach
        accumulates. We give an analytical model and we prove that
        this cumulative number grows with a power-law proportional to
        $t_{rel}^3$. The power-law allows us to detect a loss of
        efficiency in the virtual impactors search for impact
        probabilities near the generic completeness limit. The outcome
        of the comparison with Sentry shows that the two histograms of
        the number of VIs as a function of $IP$ are very consistent
        for $IP>IP^*$, which supports the confidence in the
        reliability of both systems.
    \end{abstract}
\end{small}
\maketitle

\begin{small}{\noindent\bf Keywords}: Impact Monitoring, Generic
    completeness, Line Of Variations\end{small}

\section{Introduction}
\label{sec:intro}

Some asteroids with an Earth-crossing orbit may impact our planet. A
crucial issue is to be able to identify the cases that could have a
threatening Earth close encounter within a century, as soon as new
asteroids are discovered or as new observations are added to prior
discoveries. The main goal of impact monitoring is to solicit
astrometric follow-up to either confirm or more likely dismiss the
announced risk cases, \emph{i.e.}, asteroids having some virtual
impactors \citep{milani:visearch}. This is achieved by communicating
the impact date, the impact probability and the estimated impact
energy.

This activity requires an automated system that continually monitors
the Near-Earth Asteroids (NEAs)
catalog. \clomon-2\footnote{\url{http://newton.dm.unipi.it/neodys/index.php?pc=4.1}}
and Sentry\footnote{\url{http://cneos.jpl.nasa.gov/sentry/}} are two
independent impact monitoring systems that have been operational at
the University of Pisa since 1999 and at JPL since 2002, providing the
list of asteroids with a non-zero probability to impact the Earth
within a century \citep{milani:clomon2}. There is a constant
comparison between the results of the two systems and, as required by
the International Astronomical Union, the results are carefully
cross-checked before any public announcement of an impact risk above
an agreed level, as measured by the Palermo Scale
\citep{chesley:2002PS}. There is a probability threshold called
\emph{generic completeness limit} that the two systems set as a
goal. Above this threshold the search for impact possibility has to be
complete, that is every virtual impactor (\emph{i.e.}, each connected
set of initial conditions leading to a collision with a planet) with
an impact probability greater than the completeness limit has to be
detected. Desirably, the generic completeness levels of the two
concurrent impact monitoring systems need to be as close as possible,
in order to have a common threshold down to which the two systems can
be compared.

Since the generic completeness limit is a theoretical quantity,
defined under some simplified assumptions, the level of completeness
actually reached by the system has to be measured \emph{a
  posteriori}. If the generic completeness limit in impact probability
is somewhat lower than the actual level achieved, it means that there
is a loss of efficiency in the VI search, that is some VI which could
in theory be detected is missed in the scan. Finding possible causes
and trying to decrease the number of missed VIs leads to an
improvement of the whole system, filling as much as possible the gap
between the two completeness limits. There are two methods to measure
this quantity: the first is based on an empirical law to model the
number of virtual impactors as a function of the impact probability;
the second is a direct comparison with the results of an other
independent system, namely Sentry, since we do not have a ``ground
truth'', that is a practical way to generate an absolutely complete
list of all possible VIs above a given impact probability $IP$. We
analyzed the results of the application of our new method by
exploiting both the techniques.

\section{The impact monitoring problem}
\label{sec:IM}

The mathematical methods used in impact monitoring have been developed
over the years, in a sequence of papers to which we refer the reader
for a complete explanation: \cite{milani:AN10},
\cite{milani:visearch}, \cite{chesley:2002PS},
\cite{valsecchi:resret}, and \cite{milani:clomon2}.

The classical procedure to determine the orbit of an asteroid uses as
initial condition at $t_0$ the solution $\mathbf{x}^*\in \R^N$ of a
non-linear least squares fit, along with its covariance matrix
$\Gamma\coloneq \Gamma(\mathbf{x}^*)$
\cite[Chapter~5]{milani:orbdet}. We denote with $N$ the dimension of
the parameter space\footnote{The fit parameters space has dimension
  $N=6$ when we solve for the six orbital elements, but it could have
  higher dimension if some other parameter is determined along with
  the orbital elements. A common situation is the determination of the
  Yarkovsky-related semimajor axis drift \citep{farnocchia:yarko,
    chesley:yarko, delvigna:yarko}. This has consequences also for
  impact monitoring. Indeed, a dynamical model including
  non-gravitational forces is sometimes needed to make reliable impact
  predictions, especially if the hazard analysis time span is extended
  to time intervals longer than one century.}. The nominal solution is
surrounded by a set of orbits that are still compatible with the
observational dataset, the so-called \emph{confidence region}. It is
the basic tool for the impact monitoring activity, since impact
predictions have to take into account the nominal solution as well as
its uncertainty, given by the matrix $\Gamma$. The confidence region
can be approximated by the \emph{confidence ellipsoid}
\[
    Z_{lin}(\sigma) \coloneqq \left\{\mathbf{x}\in \R^N \,:\,
    (\mathbf{x}-\mathbf{x}^*)^T C (\mathbf{x}-\mathbf{x}^*) \leq
    \sigma^2\right\},
\]
where $\sigma>0$ is a confidence level and $C\coloneq \Gamma^{-1}$ is
the normal matrix. Thus the confidence ellipsoid $Z_{lin}(\sigma)$ is
the region delimited by the $(N-1)$-dimensional ellipsoid given by the
positive definite quadratic form associated to the normal matrix
$C$. As explained in what follows, the confidence ellipsoid is just
used for local computations, since the assumptions to use this
approximation are not applicable to impact monitoring in general.

\subsection{Sampling of the confidence region}

The main goal of impact monitoring is to establish whether the
confidence region contains virtual impactors. Thus the confidence
region is sampled by a finite set of \emph{Virtual Asteroids}
(VAs). Since the dynamical system describing the asteroid orbits is
not integrable, only a finite number of VA initial conditions can be
computed and propagated over the selected time interval. This sampling
has to be done in an efficient way, that is with a few but selected
orbits, in such a way that they are as much as possible representative
of the infinite set of possible orbits. The geometric sampling methods
are one possible way to select the ensemble of virtual asteroids: for
this class of methods the sampling takes place on the intersection
between the confidence region and a differentiable manifold. In
particular, \cite{milani:ident1} and \cite{milani:multsol} introduced
a 1-dimensional sampling method, in which the geometric object is a
smooth line in the orbital elements space, the Line Of Variations. The
main advantage of this approach is that the set of VAs has a geometric
structure, that is they belong to a differentiable curve along which
interpolation is possible.

Another sampling method, namely Monte Carlo, directly uses the
probabilistic interpretation of the least squares principle, sampling
the probability distribution in the orbital elements space to obtain a
set of equally probable orbits \citep{chodas96}. More complex sampling
methods, such as 2-dimensional ones, have been proposed in
\cite{tommei:2d} and have been recently used to deal with the problem
of the imminent impactors \citep{farnocchia2015, spoto:immimp}.

\subsection{LOV propagation}

The LOV sampling computation provides a set of orbits
$\{\mathbf{x}(\sigma_i)\}_{i=-M,\,\ldots,\,M}$, corresponding to
values $\{\sigma_i\}_{i=-M,\,\ldots,\,M}$ of the LOV parameter. As
introduced in \cite{milani:AN10}, the second step in impact monitoring
consists in the propagation of each VA in the future. The close
approaches of each VA with the Earth are recorded by means of sets of
points on the Target Plane (TP), which is the plane passing through
the Earth's center and orthogonal to the unperturbed velocity of the
asteroid\footnote{That is, orthogonal to the incoming asymptote of the
  hyperbola defining the two-body approximation of the trajectory at
  the time of closest approach \citep{valsecchi:resret}.}. To avoid
geometric complications, we consider ``close'' only those approaches
with a distance from the Earth center of mass not exceeding some value
$R_{TP}$. Possible values for $R_{TP}$ range between $0.05$~au and
$0.2$~au, thus the target planes are in fact disks with a finite
radius.

The first output of the LOV propagation is a collection of Earth
encounters that have been detected for each VA during the time span of
interest. Each close approach with initial condition $\mathbf{x}_i$ is
represented by at least one trace $\mathbf{y}_i=(\xi_i,\zeta_i)$ on
the corresponding target plane\footnote{For a discussion on the choice
  of the coordinates $(\xi,\zeta)$ on the TP, see
  \cite{valsecchi:resret,milani:clomon2}.}.

\subsection{Stretching and width}
\label{subsec:stretching}
Let us denote with $\mathbf{g}:\R^N\rightarrow \R^2$ the function
mapping an orbit $\mathbf{x}$ at epoch $t_0$ to the point $\mathbf{y}$
on the TP of an encounter occurring at an epoch $t_1$. This map is the
composition of the propagation from $t_0$ to $t_1$ and of the
projection on the target plane. In the linear approximation, which is
justified since the return analysis is performed locally, the
confidence ellipsoid $Z_{lin}^X(\sigma)$ around $\mathbf{x}^*$ is
mapped onto a confidence ellipse $Z_{lin}^Y(\sigma)$ around
$\mathbf{y}^*=\mathbf{g}(\mathbf{x}^*)$. By applying the covariance
propagation law, the target plane ellipse is defined by
\[
  (\mathbf{y}-\mathbf{y}^*)^T C_{\mathbf{y}} (\mathbf{y}-\mathbf{y}^*)
  \leq \sigma^2,
\]
where $C_{\mathbf{y}}=\Gamma_{\mathbf{y}}^{-1}$ and
$\Gamma_{\mathbf{y}} = D_{\mathbf{x}^*}\mathbf{g}
\,\Gamma_{\mathbf{x}}\, (D_{\mathbf{x}^*}\mathbf{g})^T$.  The square
roots of the eigenvalues of $\Gamma_{\mathbf{y}}$ are the semimajor
and semiminor axis of $Z_{lin}^Y(1)$, respectively called
\emph{stretching} and \emph{width}.

If we have a LOV sampling, as in the case of impact monitoring, we
have a further map $\mathbf{s}:\R\rightarrow \R^N$, which is the
parameterization of the LOV as a differentiable curve, that is
$\mathbf{s}(\sigma_i)=\mathbf{x}_i$. In this way we can consider the
composite map $\mathbf{f}\coloneq \mathbf{g}\circ \mathbf{s}$ from the
sampling space to the target plane and define the \emph{stretching
  along the LOV} in $\sigma$ as
\begin{equation}\label{eq:stretching}
  S(\sigma) \coloneq \left| \frac{d\mathbf{f}}{d\sigma}(\sigma) \right|.
\end{equation}
The stretching along the LOV measures the displacement of two points
on the target plane as a function of the separation between the
corresponding points in the sampling space.

\subsection{Return analysis}
\label{subsec:retanalysis}

If the close encounter collection is not empty, the first step in the
decomposition is to sort the encounters by date, followed by a
splitting into \emph{showers} that are clustered in time. The showers
are then divided into contiguous LOV segments: the shower encounters
are re-sorted according to their LOV index, and then, advancing
through the sorted list, we ``cut'' the shower wherever a gap in the
LOV index is encountered. We call \emph{returns} these sets of
dynamically related encounters. After the entire propagation, each
close approach is carefully analyzed to search for virtual impactors.

When there are many points on the TP in a given return it is easy to
understand the LOV behavior: the stretching is small and the linear
theory is locally applicable. On the contrary, in strong non-linear
cases the stretching is large and changes rapidly from point to point:
in this case a local analysis is necessary in the neighborhood of each
VA. We refer to \cite{milani:clomon2,tommei:phd} for a discussion on
the possible geometries of the LOV trace on a target plane and for a
proper solution for each case. Here we only want to outline the basic
idea of the return analysis with a simple example. The key point is
that the virtual asteroids are not just a set of points but they
sample a smooth curve, allowing us to interpolate between consecutive
sample points. For instance, let us suppose two consecutive VAs
$\mathbf{x}_i$ and $\mathbf{x}_{i+1}$ have TP trace points
$\mathbf{y}_i$ and $\mathbf{y}_{i+1}$ straddling the Earth impact
cross section. If the geometry of the TP trace is simple enough
(principle of simplest geometry), an interpolation method provides a
point on the LOV $\mathbf{x}_{i+\delta}$ with $0<\delta<1$ and such
that $\mathbf{y}_{i+\delta}$ is inside the Earth impact cross section:
then, around $\mathbf{x}_{i+\delta}$ there is a virtual impactor. If a
virtual impactor has been found, by computing the probability density
function with a suitable Gaussian approximation centered at
$\mathbf{x}_{i+\delta}$ it is possible to estimate the probability
integral on the impact cross section, that is the impact probability
associated with the given VI.

\subsection{Generic completeness definition}
\label{sec:generic_comp}

This paper is focused on the completeness of impact monitoring, that
is on the completeness of the VI search. The \emph{completeness limit}
can be formally defined as the highest impact probability VI that can
escape the detection. Given the complexity of the problem of impact
monitoring, this completeness cannot be computed and thus we use an
approximate definition, which assumes idealized circumstances. The
\emph{generic completeness limit} is the highest impact probability VI
that could possibly escape detection, if the associated return on the
target plane is fully linear \citep{milani:clomon2}, that is under the
hypothesis of full linearity of the map $\mathbf{f}$. Under this
generic assumption the trace of the VAs on the target plane is simply
a straight line on the TP: if there is a VI, this line intersects the
impact cross section $D$ on the TP in a chord of the circle bounding
$D$.

For a VI to be detectable by the system, at least one LOV orbit has to
cross the target plane. For now, we make the assumption that one point
on the TP is sufficient for the VI detection (see
Section~\ref{sec:miss_VI} for a discussion on this choice). The
stretching is a key quantity to estimate the number of LOV orbits that
intersects the target plane of a given encounter: the higher is the
stretching, the greater is the separation between two consecutive
points on the TP. In particular, if the stretching becomes too high,
the separation on the TP could exceed the diameter $2R_{TP}$ of the TP
itself, thus no point crosses the TP and the virtual impactor is not
detected with certainty. As a consequence, to have at least one point
on the TP, the stretching cannot exceed a maximum value
$S_{max}$. Assuming a uniform step-size $\Delta \sigma$ for the LOV
sampling, the condition $S \cdot \Delta\sigma \leq 2R_{TP}$ must hold,
which implies
\begin{equation}\label{stretch_cond}
    S\leq \frac{2 R_{TP}}{\Delta\sigma} \eqcolon S_{max}.
\end{equation}
We now convert the previous inequality into a condition concerning the
impact probability. The probability density function induced on the
LOV by the probability density function on the orbital elements
space\footnote{The probability density defined on the orbital elements
  space is the propagation of the Gaussian density assumed for the
  residuals, and it is Gaussian with mean $\mathbf{x}^*$ (the nominal
  solution) and covariance matrix $\Gamma = \Gamma(\mathbf{x}^*)$, the
  covariance matrix of the orbit determination least squares fit
  \citep{milani:orbdet}.} is
\begin{equation}\label{eq:gauss_LOV}
    p(\sigma) \coloneq \frac{1}{\sqrt{2\pi}} e^{-\frac{\sigma^2}2},
\end{equation}
where $\sigma$ is the LOV parameter. Under our assumptions, the impact
probability of the VI is given by integrating $p(\sigma)$ over the
inverse image of the diametrical chord contained in the LOV projection
on the TP. Assuming that the stretching is $S$, the integration domain
is an interval with length ${2b_\Earth}/{S}$ in the $\sigma$-space. As
a consequence, the following estimation holds:
\begin{equation*}
    IP \simeq \frac{2b_\Earth}{S} \cdot p(0) \geq
    \frac{1}{\sqrt{2\pi}} \frac{2b_\Earth}{S_{max}} =
    \frac{\Delta\sigma}{\sqrt{2\pi}} \cdot \frac{b_\Earth}{R_{TP}}.
\end{equation*}
where $b_\Earth$ is the radius of the Earth impact cross section on
the TP, which takes into account the gravitational focusing
\citep{valsecchi:resret}. Since the generic completeness limit $IP^*$
is the minimum impact probability of a VI for which condition
\eqref{stretch_cond} is satisfied, we have
\[
    IP^* \coloneq \frac{\Delta\sigma}{\sqrt{2\pi}} \cdot \frac{b_\Earth}{R_{TP}}.
\]
This quantity depends on the amount of gravitational focusing, and can
be higher for asteroids with a low velocity at infinity. To obtain a
uniform threshold, we can use a typical value for $b_\Earth$,
\emph{e.g.}, $b_\Earth= 2R_\Earth$. This is an approximation, but we
need to chose a fixed value applicable to all asteroids with VIs, thus
we are using a value appropriate for low relative velocity NEA, taking
into account that these have larger probability of having VIs. In this
way we have an approximated value for the generic completeness limit:
\begin{equation}\label{eq:comp_limit}
    IP^* = \frac{\Delta\sigma}{\sqrt{2\pi}} \cdot
    \frac{2R_\Earth}{R_{TP}}.
\end{equation}

With the previous simple equation we can provide an estimate of the
completeness reached by \clomon-2 while it was operating with a
uniform LOV sampling. As shown in \cite{milani:clomon2}, we
obtain\footnote{The TP radius adopted for the close approaches
  detection was $R_{TP}=0.2\text{ au }\simeq 4700\,R_\Earth$, the LOV
  sampling was performed with 2401 virtual asteroids over the interval
  $|\sigma|\leq 3$, thus $\Delta\sigma =0.0025$.}
\[
    IP^* \simeq 4.24\cdot 10^{-7},
\]
corresponding to a maximum stretching value of $S_{max} = 3.8\cdot
10^6\,R_\Earth$. In the case at least two points on the TP are
required for the detection of a VI, the generic completeness limit
$IP^*$ would be simply twice as much.

\section{An optimal method for LOV sampling}
\label{sec:prob_sample}

Before the switch to the new method presented in this paper, \clomon-2
performed the LOV sampling by means of uniformly spaced points in the
parameter $\sigma$, and over the interval $|\sigma|\leq 3$. Since the
probability density on the LOV is the Gaussian defined by
\eqref{eq:gauss_LOV}, a uniform step in $\sigma$ is not optimal
because the probability of each sampling interval is high around
$\sigma = 0$, whereas it becomes too low near the LOV endpoints. This
is the reason to use a step-size that is inversely proportional to the
probability density. The new sampling is such that the probability of
the interval among two consecutive points of the sampling is
constant. It means that if $\{\sigma_i\}_{i=1,\,\ldots,\,M}$ are the
sampling nodes, then
\[
    \mathbb P([\sigma_i,\sigma_{i+1}]) \coloneq
    \int_{\sigma_i}^{\sigma_{i+1}} p(\sigma)\,d\sigma
\]
is constant, \emph{i.e.}, it does not depend on $i$. As a consequence,
the sample points will be more dense around the value $\sigma=0$
(nominal solution) whereas they become more sparse moving towards the
tips. Furthermore, to avoid too long intervals at LOV tails, when the
interval length exceeds a certain threshold $\Delta\sigma_{max}$ the
sampling becomes uniform to the threshold value. This technical detail
is needed to bound the step-size, since too large values of
$\Delta\sigma$ could result in divergence of the algorithms to find
VIs, effectively cutting the LOV tails out from the analysis. To avoid
this loss and also to keep a better control of the cases in which a
geometrically large VI occurs for large values of $\sigma$, the
sampling interval has been extended to $|\sigma|\leq \sigma_{max}=5$
(and also to cover the same interval as JPL's Sentry system).

We now derive a condition for the step-size to guarantee a constant
probability to each sampling interval. This is achieved by repeating
the same argument of Section~\ref{sec:generic_comp}, not around the
nominal solution, but around the VA corresponding to a generic
$\sigma$ value. In particular, to have at least one point on the TP
corresponding to the LOV parameter $\sigma$, the stretching must
satisfy the condition
\[
    S(\sigma) \leq \frac{2R_{TP}}{\Delta\sigma(\sigma)}\eqcolon
    S_{max}(\sigma).
\]
Under linearity assumptions, the impact probability of the VI around
the value $\sigma$ of the LOV parameter is given by
\[
    IP(\sigma) \simeq \frac{2b_\Earth}{S(\sigma)} \cdot p(\sigma) \geq
    \frac{2b_\Earth}{S_{max}(\sigma)} \cdot p(\sigma).
\]
By assuming $b_\Earth=2R_\Earth$ as above, we can define
\begin{equation}\label{eq:compl_dep_on_sigma}
    IP^*(\sigma)\coloneq \frac{2
    R_\Earth}{R_{TP}}\Delta\sigma(\sigma)\*p(\sigma),
\end{equation}
which is the minimum probability of a detectable VI around the value
$\sigma$ of the LOV parameter. To ensure the detection of a VI with
probability $IP^*(\sigma)$ for all $|\sigma|\leq \sigma_{max}$, we
have to define the generic completeness as
\[
    IP^* \coloneq \sup_{|\sigma|\leq \sigma_{max}} IP^*(\sigma).
\]
Notice that we can take the supremum since $IP^*(\sigma)$ are bounded
from above. Using \eqref{eq:compl_dep_on_sigma}, this implies the
following condition on the step-size:
\[
    \Delta\sigma(\sigma) \leq \frac{R_{TP}}{2R_\Earth} IP^*
    \frac{1}{p(\sigma)}.
\]
The computation of $\Delta\sigma_i\coloneq \Delta\sigma(\sigma_i)$
starts from the value $\sigma_0=0$ (corresponding to the nominal
solution), and by recursion we compute
\begin{equation}\label{eq:step-size}
    \left\{\begin{array}{l} \Delta\sigma_{i}=
          \min\left(\dfrac{R_{TP}}{2R_\Earth} IP^* \dfrac
            1{p(\sigma_{i})},
            \Delta\sigma_{max}\right),\quad i\geq 0\\[0.4cm]
          \sigma_{i+1}=\sigma_i+\Delta\sigma_i,\quad i\geq 1
      \end{array}\right.
\end{equation}
for the sampling of the interval $0\leq \sigma \leq \sigma_{max}$. For
the negative side, that is the interval $-\sigma_{max}\leq \sigma\leq
0$, the nodes are $\{-\sigma_i\}_{i\geq 0}$. By definition of $IP^*$,
the step-size is such that the inverse image of the diametrical chord
has the same probability for all $\sigma$. And this in turn implies
that each sampling interval has the same probability. Of course this
holds only for the intervals with length not exceeding
$\Delta\sigma_{max}$.

\begin{figure}[h!]
    \centering
    \includegraphics[width=0.75\textwidth,scale=0.3]{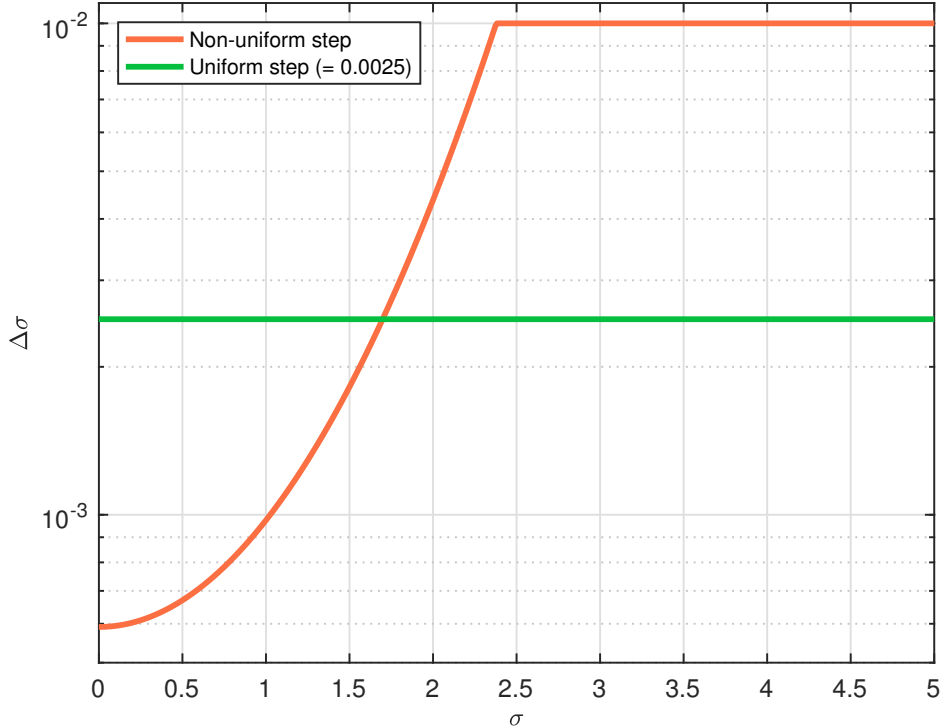}%
    \caption{Graph of the step-size as a function of the LOV parameter
      $\sigma$, with the parameter choice as in \eqref{eq:opt}.
      \emph{Orange line}: step-size for the uniform-in-probability
      sampling. \emph{Green line}: uniform step-size, as previously
      used by \clomon-2.}\label{fig:step-size}
\end{figure}

The new method allows one to choose the completeness limit before the
sampling procedure starts. Thus the number of VAs per LOV side is
known only at the end of the procedure \eqref{eq:step-size}, since it
stops when $\sigma_{max}$ is exceeded. This is somewhat different with
respect to the previous method, for which we first established the
number of VAs per LOV side and as a consequence the completeness level
was determined, as in
equation~\eqref{eq:comp_limit}. Figure~\ref{fig:step-size} shows the
behavior of the step-size as a function of the LOV parameter
$\sigma$. It was generated using the following values, which are the
same used for the current impact monitoring computations:
\begin{equation}\label{eq:opt}
  IP^* = 1\cdot10^{-7},\quad \sigma_{max}=5,\quad \Delta\sigma_{max}=0.01.
\end{equation}

This choice of parameters leads to the computation of at most 4719
multiple solutions\footnote{The number of VAs may actually be lower,
  because we terminate the sampling when the residuals become too
  large, currently when $\chi>5$.}, whereas at most 2401 VAs were
computed with the previous sampling method
\citep{milani:multsol}. This gives about twice the computational load
than before, that however permits a decrease by a factor $\simeq 4$ in
the generic completeness limit, since the value corresponding to the
previous uniform sampling was $IP^*\simeq 4.24\cdot 10^{-7}$, as
computed in Section~\ref{sec:generic_comp}.

\section{Missing VI detection: possible causes}
\label{sec:miss_VI}

Once a generic completeness level $IP^*$ has been established, the
goal is to find all the virtual impactors with probability
$IP>IP^*$. This might not be achieved in actual computations in case
some VI is not detected by the scan. In general, the identification of
the causes and the development of possible solutions are important
issues. We were aware of two possible sources of VI loss: in what
follows we discuss both of them.

\subsection{Duplicated points in the same return}

In Section~\ref{subsec:retanalysis} we have defined showers and
returns as particular dynamically related subsets of the set of close
encounters of all the virtual asteroids. The associated iterative
procedure properly works as long as the showers are
well-defined. Indeed, there are cases in which there is not a clear
clustering in time among the encounters, causing the presence of very
long showers (also called extended showers) in the decomposition. As a
consequence, some virtual asteroids appear multiple times in the same
return, and this must be avoided for the subsequent TP analysis to be
successful. In general, several phenomena may cause this problem, for
instance temporary capture of the asteroid by the Earth, Earth-like
orbit (as for 2000~SG$_{344}$) and encounters with low relative
velocity. There are also cases in which a close approach, defined in
time as the interval in which the distance from the encountered body
is less than some $D_{min}$ (for the Earth $D_{min}=0.2$~au), contains
multiple occurrences of a local minimum distance, and this also
generates returns with duplicated points. Such bad cases are not so
rare as one may think, especially if we use a denser LOV sampling. The
problem affects $\simeq 25\%$ of the asteroids in the NEODyS risk list
(as of April 2018).

An example of such situation is given by asteroid 2000~SG$_{344}$,
already used to present the problem of duplicated virtual asteroids in
\cite{milani:clomon2}, Figure~6. We show a figure which cannot be
identical to the one in \cite{milani:clomon2} since, even if the
observational data set is exactly the same, it is currently treated
with a different astrometric error
model. Figure~\ref{fig:shower_2000SG344_bef} shows a single extended
shower for 2000~SG$_{344}$ lasting for about one year. In this
situation the previous algorithm identifies a single shower around the
year 2069. Figure~\ref{fig:shower_2000SG344_bef} shows the closest
encounter date of the points belonging to the extended shower, against
the LOV index. The clustering into returns is clear from the picture,
but there are returns with multiple occurrence of the same virtual
asteroid, as highlighted in orange.

\begin{figure}[h!]
    \centering
    \includegraphics[width=0.7\textwidth,scale=0.3]{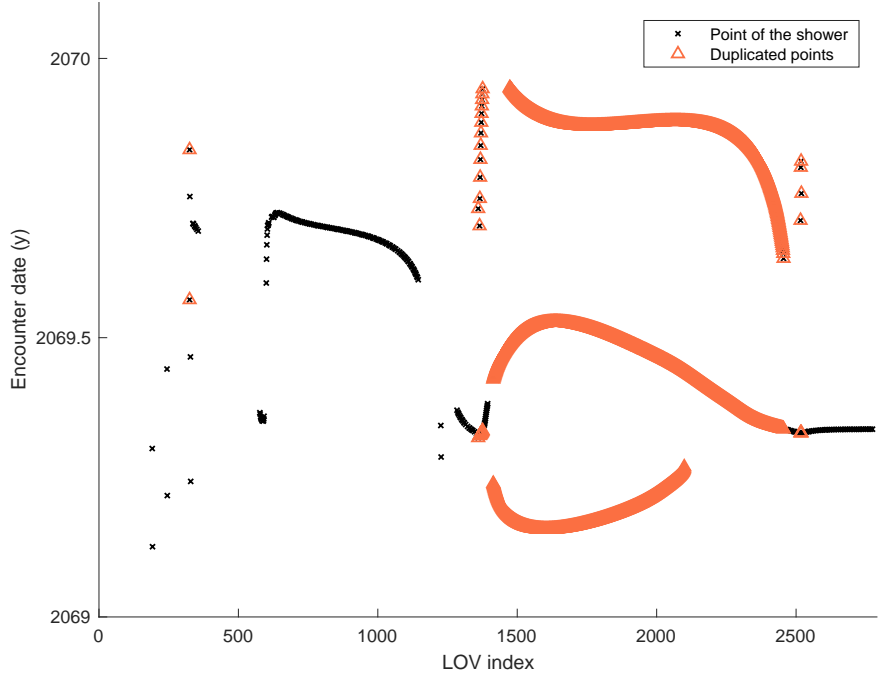}%
    \caption{Extended shower around the year 2069 for asteroid
      2000~SG$_{344}$. For each virtual asteroid belonging to the
      shower, we plot the closest encounter date against the LOV index
      (black crosses). The virtual asteroids that appear more than one
      times in the same return are highlighted (orange
      triangles).}\label{fig:shower_2000SG344_bef}
\end{figure}

To handle such cases, we decided to introduce a further
splitting procedure every time a return contains duplicated virtual
asteroids. This leads to the definition of sub-showers and
sub-returns. First, we sort the return by ascending closest approach
time, then we scan and divide it as follows: we cut the return every
time a virtual asteroid is already present among the previous ones,
starting from the previous cut. Each subset obtained in this way is
called a \emph{sub-shower}. Second, each sub-shower is divided
into contiguous LOV segments, called \emph{sub-returns}, in the same
way we obtain the returns from the showers.  \ref{app:dec_subret}
contains a mathematical description of the procedure just described,
with a formal proof of completion. Thus the decomposition algorithm
ensures that each sub-return is free from duplicated virtual
asteroids. Then we use, as returns of the original shower, all the
sub-returns of all the sub-showers. Figure~\ref{fig:split_ret} shows
the outcome of the application of the splitting procedure to the
return going from LOV index 1414 to 2779. As a graphical
representation of the algorithm, we pass a horizontal line from the
bottom to the top of the plot to scan the return by ascending closest
approach time and we make a cut when we encounter a duplication. The
return is decomposed into three sub-showers (represented with
different marks and colors), each of which is further decomposed in
contiguous LOV segments.

It is apparent from Figure~\ref{fig:split_ret} that our algorithm
splits more than the minimum possible, \emph{e.g.}, in this figure it
can be seen that there are two cuts splitting dynamically related
encounters (the second and third horizontal lines starting from the
bottom). However, this has no negative consequences on the performance
of the VI detection, because even the LOV interval between the last
index of a sub-return and the first of the next is actually scanned,
by using the algorithms for the tail and head of the return
\citep{milani:clomon2}. The split is different from the one performed
by Sentry, but the result in terms of impact monitoring is the same.

In the work carried out for the switch to the new sampling of the Line
Of Variations, we implemented the splitting procedure into sub-showers
and sub-returns, thus the results presented in
Section~\ref{sec:results} take already into account also this
improvement.

\begin{figure}[h!]
    \centering
    \includegraphics[width=0.70\textwidth]{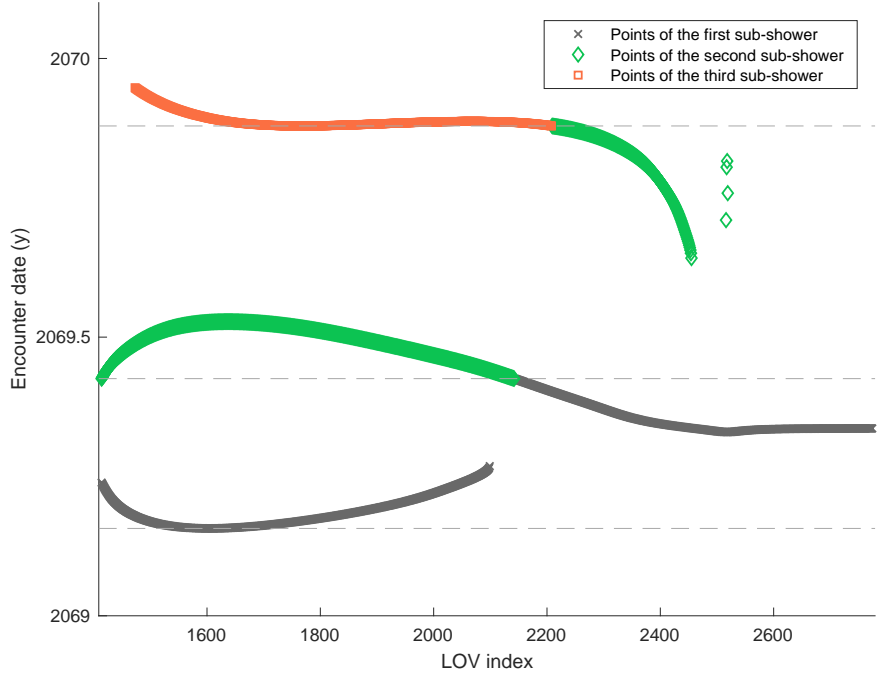}%
    \caption{Application of the decomposition procedure to a return of
      the extended shower of 2000~SG$_{344}$. The return is decomposed
      into three sub-showers, represented with different marks and
      colors (orange, green, and gray, respectively). We mark with a
      black circle the first point (in terms of time) of each
      sub-shower: in particular, the first LOV indices of the three
      sub-showers are 1604, 1414 and 1766,
      respectively.}\label{fig:split_ret}
\end{figure}

\subsection{Extreme non-linear cases}
When there is a strong non-linearity due to previous close approaches,
the stretching is large and rapidly varying, causing the LOV behavior
to be complex. Thus strong non-linearity of the map $\mathbf{g}$
introduced in Section~\ref{subsec:stretching} can lead to unsuccessful
detections of virtual impactors.

Inside a return, only some intervals between consecutive VAs can
contain a minimum of the closest approach distance and they are
identified by a geometric classification \citep{milani:clomon2}. The
analysis continues by checking if the minimum distance could be small
and, if so, by applying iterative schemes and interpolation along the
LOV to determine the minimum distance and the corresponding LOV orbit,
respectively. In most cases, both \clomon-2 and Sentry use the
\emph{modified regula falsi} applied to the continuous function
\[
    f(\sigma) \coloneq \frac{d r^2}{d\sigma}(\sigma)
\]
over the interval $[\sigma_1,\sigma_2]$ under consideration, where
$r^2(\sigma) \coloneq \xi^2(\sigma) + \zeta^2(\sigma)$ is the square
of the distance from the Earth center. This algorithm is convergent,
but failures may occur if for some value of $\sigma$ in the interval
the function is undefined. It happens when the TP of the encounter
around that date is missed, which generally indicates that the two TP
points under consideration do not actually belong to the same return
\citep{milani:clomon2}.

Another typical situation that may determine an unsuccessful detection
is caused by singletons. They are returns consisting of one single
point on the target plane, due to a very high value of the stretching
at the corresponding LOV point and indicating an extremely non-linear
situation. By definition the modified \emph{regula falsi} cannot be
applied in this case. The solution adopted by \clomon-2 is the Newton
method with bounded steps \citep{milani:clomon2}: this method cannot
diverge, but can fail to converge either by finding a value of
$\sigma$ for which the TP is missed, as in the previous case, or by
exceeding a preset maximum number of iterations without achieving
convergence with the required accuracy. In both cases there is the
possibility that a VI actually exists in the analyzed LOV segment, but
the method fails in detecting it. A possible solution to deal with
this problem is to resort to a densification technique. If we suitably
densify the LOV sampling around the orbit corresponding to the
singleton and obtain a return with $4$-$5$ points on the target plane
instead of a lone point, this makes the TP analysis easier and more
effective. Indeed, Figure~\ref{fig:hist2} shows that the power-law
fits well only for values of $IP$ corresponding to at least $4$ points
on the TP: this is a further motivation for the use of a densification
technique also for returns with very few points, although singletons
remain the primary motivation for densification. We did not implement
the densification of the sampling, though we intend to include it in
our future work as discussed in Section~\ref{sec:conc}.

\section{Results}
\label{sec:results}

After the switch to the new sampling method, the actual level of
completeness reached by the system has to be measured and compared
with the one corresponding to the previous method. To perform this
analysis we make use of the histograms of the number of virtual
impactors $\mathcal{N}$ as a function of the inverse of the impact
probability $IP$ (for the sake of clarity we used $\log_{10}(1/IP)$ on
the horizontal axis). We made a histogram for the ensemble of VIs
obtained using a uniform step-size in $\sigma$ (previous sampling
method employed by \clomon-2) and a second one for the results of the
new sampling method, uniform in probability, on the same set of
asteroids, with the observations available at the same date. We used
the data contained in the NEODyS
database\footnote{\url{http://newton.dm.unipi.it/neodys/}} immediately
before and after \clomon-2 switched to the new method, namely on 29
October 2016. As a sample, we used the 571 asteroids in the NEODyS
Risk List at that time, thus computed with the uniform sampling in the
LOV parameter $\sigma$. As a result of the application of the new
method we obtain a set of 558 asteroids with virtual impactors out of
571. The two histograms are shown in Figure~\ref{fig:hist1}
and~\ref{fig:hist2} (left panel). The first thing that stands out is
the very different total number of virtual impactors: the application
of the new method almost doubled this number, causing an increase from
13604 to 25942 virtual impactors.

In both the histograms there are two vertical lines corresponding to
two different values of impact probability:
\begin{itemize}
  \item the orange line (the one on the right) represents the impact
    probability corresponding to the assumption that a VI is
    detectable with at least one target plane point (which is the
    minimum requirement), in full linear conditions;
  \item the red line (the one on the left) represents the impact
    probability corresponding to the assumption that at least two
    target plane points are needed for the detection of a virtual
    impactor.
\end{itemize}
The part of the histogram on the left of the orange line corresponds
to $IP>IP^*$, namely to the impact probabilities of the virtual
impactors that the system should detect with certainty. The number of
virtual impactors $\mathcal{N}$ is expected to grow as the impact
probability goes down to $IP^*$. Even looking just to the histogram
bars we notice that the growth of $\mathcal{N}$ seems to slightly slow
down close to the vertical lines with respect to what one would
expect. To highlight this behavior we fitted the histogram contour
for $IP>IP^*$ with a suitable law: the best-fit line is represented in
light blue in both plots of Figure~\ref{fig:hist1} and~\ref{fig:hist2}
(left panel) and it was obtained with a linear correlation coefficient
$>0.99$ in both cases. In particular, we performed a linear fit of the
histogram contour in a log-log scale: Figure~\ref{fig:hist1}
and~\ref{fig:hist2} (right panel) show the points corresponding to the
histogram bar tips, those selected for the fit, and the best-fit
line. A linear fit in the log-log plot corresponds to a power-law for
the number of virtual impactors, that is
\begin{equation}\label{eq:ascending}
    \mathcal{N} = n(IP) = c_1 \cdot
    \left(\frac{IP^*}{IP}\right)^\alpha\quad \text{if }IP\geq IP^*.
\end{equation}
As a result we obtained the estimation $\alpha\simeq 0.678$ for the
histogram related to the uniform sampling, and $\alpha\simeq 0.664$
for the histogram related to the uniform-in-probability
sampling. These results are remarkably close, even if obtained with
different sampling of the LOV. Furthermore, other results of this
paper (see Section~\ref{sec:nVI_time} and Section~\ref{sec:JPL}) seem
to confirm this numerical evidence. At this point we are still not
able to provide a full interpretation of the value of $\alpha\simeq
2/3$ as a mathematical property of the ensemble of all the
VIs. Nevertheless, as a first step in this direction, in
Section~\ref{sec:nVI_time} we outline a possible model to explain the
growth of the number of virtual impactors $\mathcal{N}$ as a function
of the time.

\begin{figure}[t!]
    \centering
    \raisebox{-0.5\height}{
      \includegraphics[width=0.575\columnwidth]{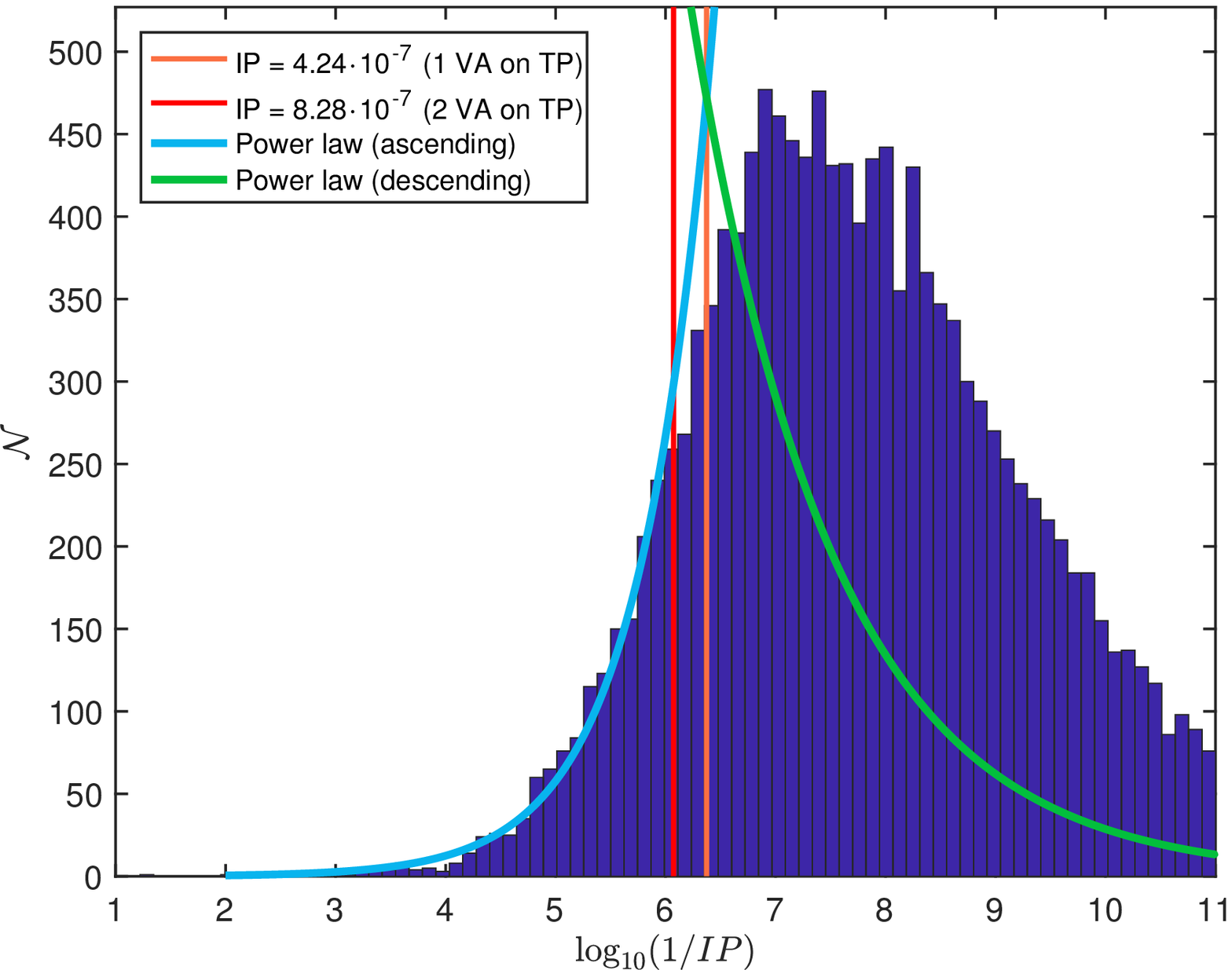}
    }%
    \hfill
    \raisebox{-0.5\height}{
      \includegraphics[width=0.375\columnwidth]{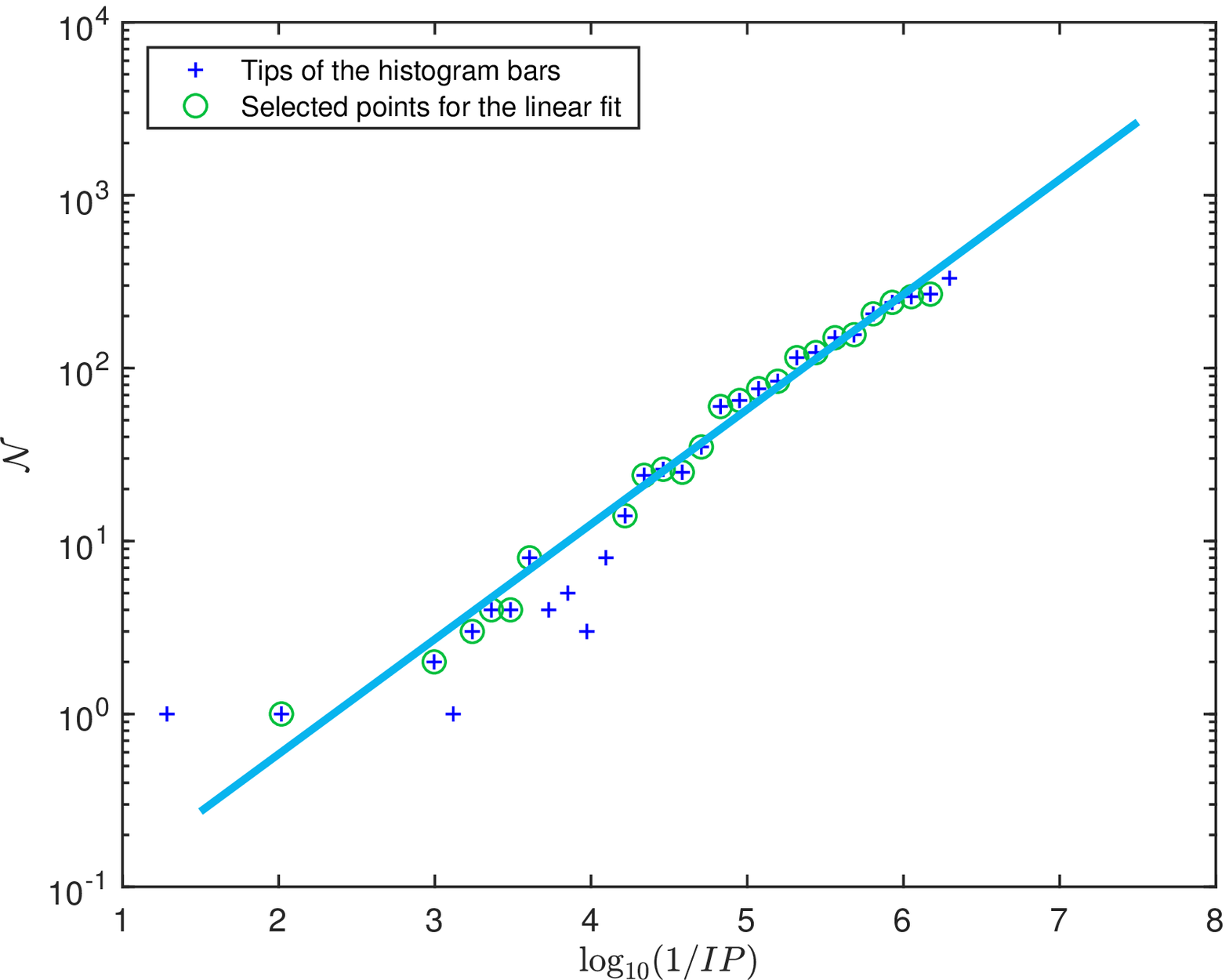}
    }%
    \caption{\emph{Left panel}. Histogram of the number of virtual
      impactors (as of October 2016) as a function of the inverse of
      the impact probability, in the case of uniform sampling in
      $\sigma$. \emph{Right panel}. Log-log plot of the histogram bar
      tips and corresponding regression line, providing
        $\alpha\simeq 0.678$.}
    \label{fig:hist1}
    \vspace{0.5cm}
    \raisebox{-0.5\height}{
      \includegraphics[width=0.575\columnwidth]{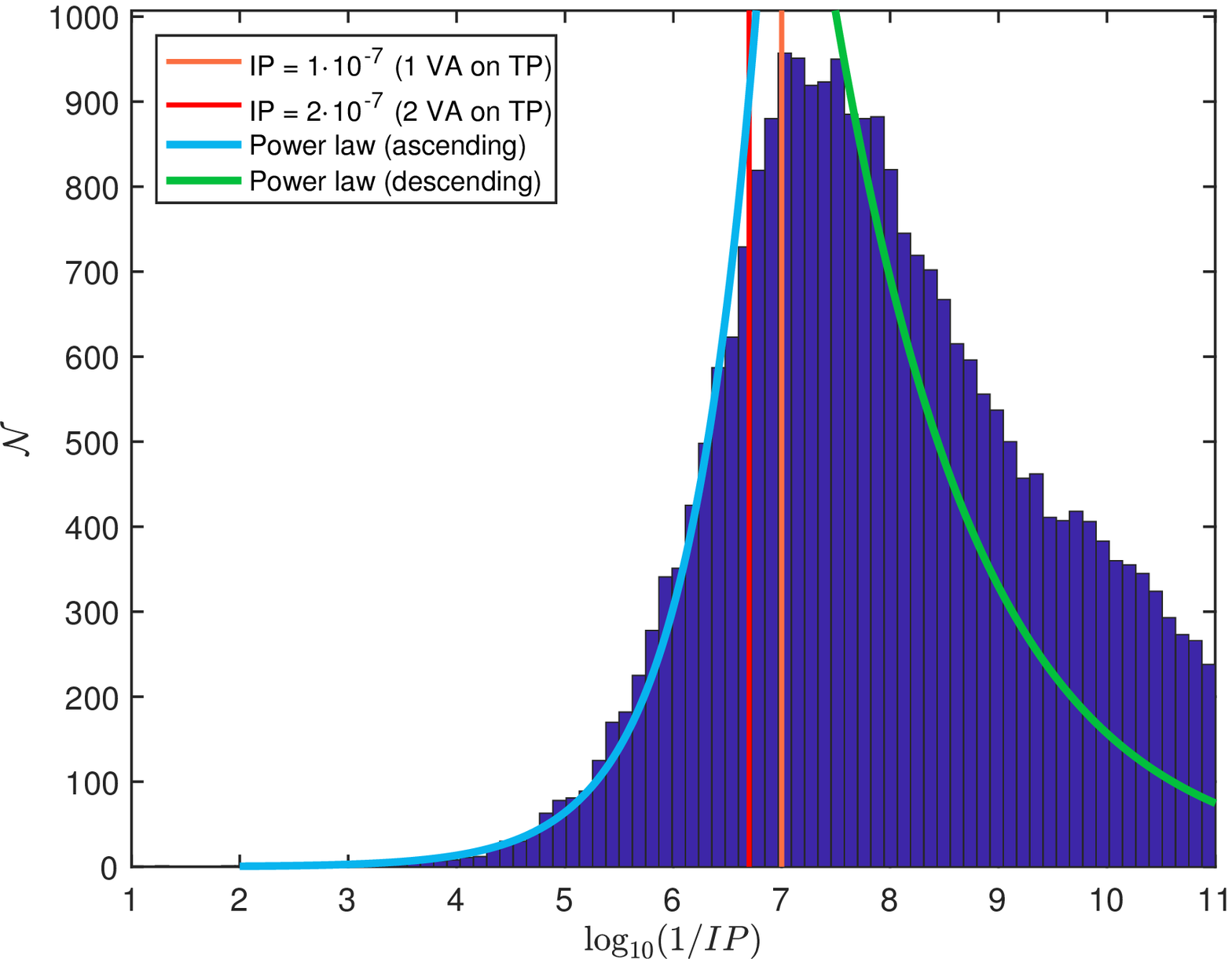}%
    }%
    \hfill
    \raisebox{-0.5\height}{
      \includegraphics[width=0.375\columnwidth]{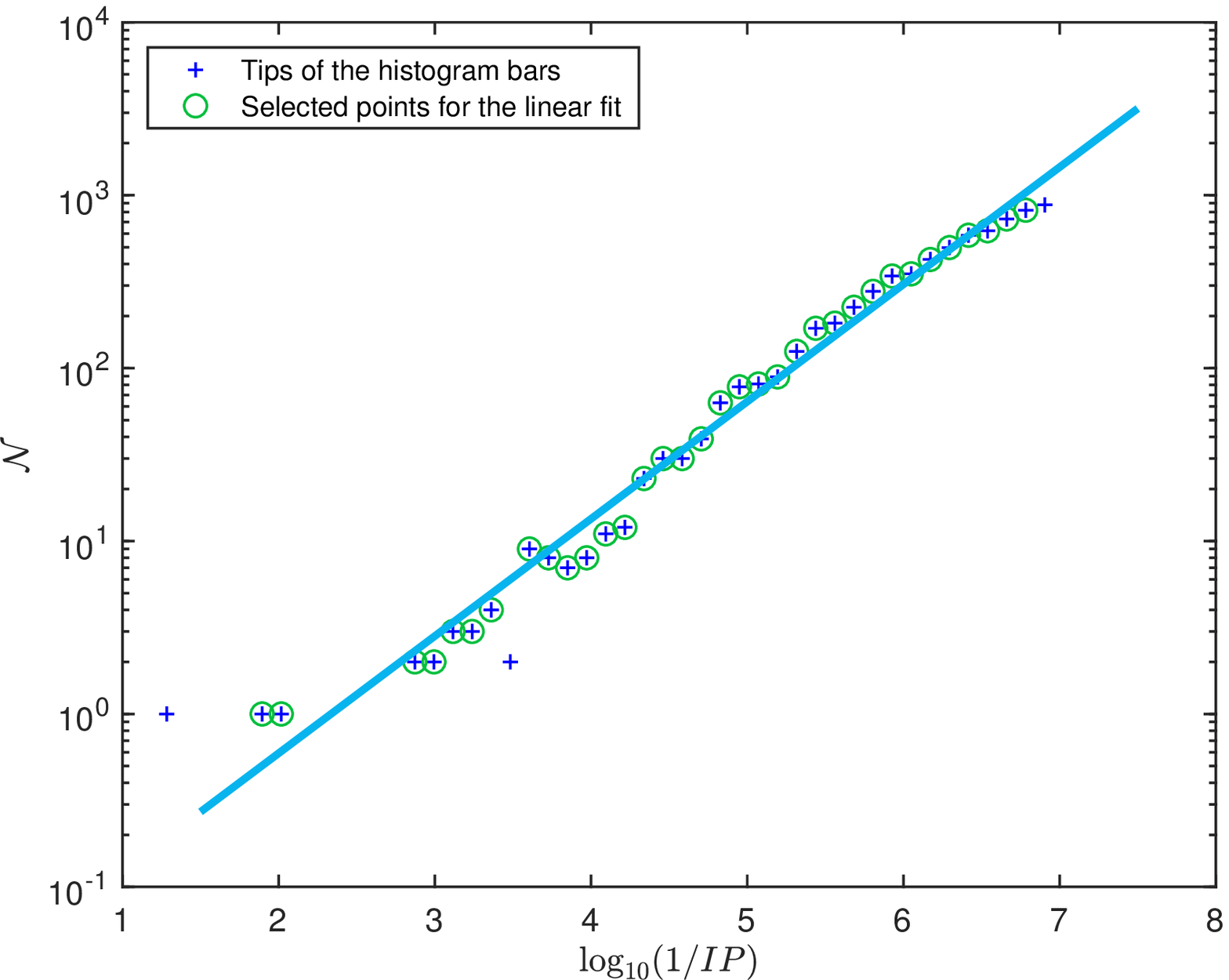}
    }%
    \caption{\emph{Left panel}. Histogram of the number of virtual
      impactors (as of October 2016) as a function of the inverse of
      the impact probability, in the case of uniform-in-probability
      sampling. \emph{Right panel}. Log-log plot of the histogram bar
      tips and corresponding regression line, providing
        $\alpha\simeq 0.664$.}
    \label{fig:hist2}
\end{figure}

For $IP<IP^*$, the probability to find a virtual impactor with impact
probability $IP$ is roughly the ratio $IP/IP^*$. Thus the number of
virtual impactor is
\[
    \mathcal{N} = n(IP) \* \frac{IP}{IP^*}= c_1 \cdot
    \left(\frac{IP^*}{IP}\right)^{\alpha-1}\quad \text{if }IP<IP^*.
\]
This equation corresponds to the descending green line in both
plots. We notice that this line does not fit the right side histogram
contour: there are more virtual impactors than expected from this
law. For the results obtained with the new sampling, this can be
explained by considering that the LOV sampling returns to be uniform
in $\sigma$ close to the LOV tails, and thus in that region the LOV
is over-sampled with respect to what would be needed to reach the
completeness level $IP^*$. With the old sampling, the production of
VIs with low impact probability is still larger, because the sampling
was not optimized specifically for $IP>IP^*$.

The differences between the fitted ascending curve corresponding to
equation~\eqref{eq:ascending} and the histogram clearly show that
there is a loss of efficiency in finding virtual impactors with impact
probability slightly above the completeness level. Indeed, for these
impact probability values the expected number of VIs based on the
empirically fitted power-law is larger than the number of actually
detected ones. To define the generic completeness limit we make the
assumption that even a single point on the target plane allows the
system to detect the virtual impactor, but from a practical point of
view this completeness level cannot be reached due to non-convergence
of the iterative schemes in some difficult cases, as explained in
Section~\ref{sec:miss_VI}. Actually, this does not happen only with a
single point on the target plane (singleton), but even with very few
points. As discussed in Section~\ref{sec:conc}, a densification of the
LOV sampling is the way to fill the gap between the actual
completeness level and the theoretical generic completeness. A
possible densification technique could convert returns with very few
points into return with 4-5 points. As a result, the iterative
methods used (such as \emph{regula falsi} and Newton's method
with bounded steps) should converge in a larger number of cases and
the VI search could be more efficient and complete.

\section{Analytical formulation for the time evolution}
\label{sec:nVI_time}

We analyze the behavior of the cumulative number of virtual impactors
$\mathcal{N}$ as a function of the time elapsed from the initial
conditions. As starting sample we used all the asteroids in the NEODyS
Risk List (as of April 2018), which contained 734 objects and 32906
virtual impactors, without considering the special cases\footnote{The
  four special cases are (101955)~Bennu, (99942)~Apophis,
  (29075)~1950~DA, and (410777)~2009~FD. These are currently the only
  asteroids that required the inclusion of the Yarkovsky effect for
  the impact monitoring.}. The histograms of Figure~\ref{fig:hist_i_H}
show the distribution of the inclination $i$ and of the absolute
magnitude $H$ among the considered objects. As clear from simple
arguments, the majority of the sample contains small low-inclination
asteroids: for instance, $95.5\%$ of the sample has absolute magnitude
$H>22$ and $71\%$ has inclination $i<5$~deg.

\begin{figure}[h!]
  \centering
  \includegraphics[width=0.475\columnwidth]{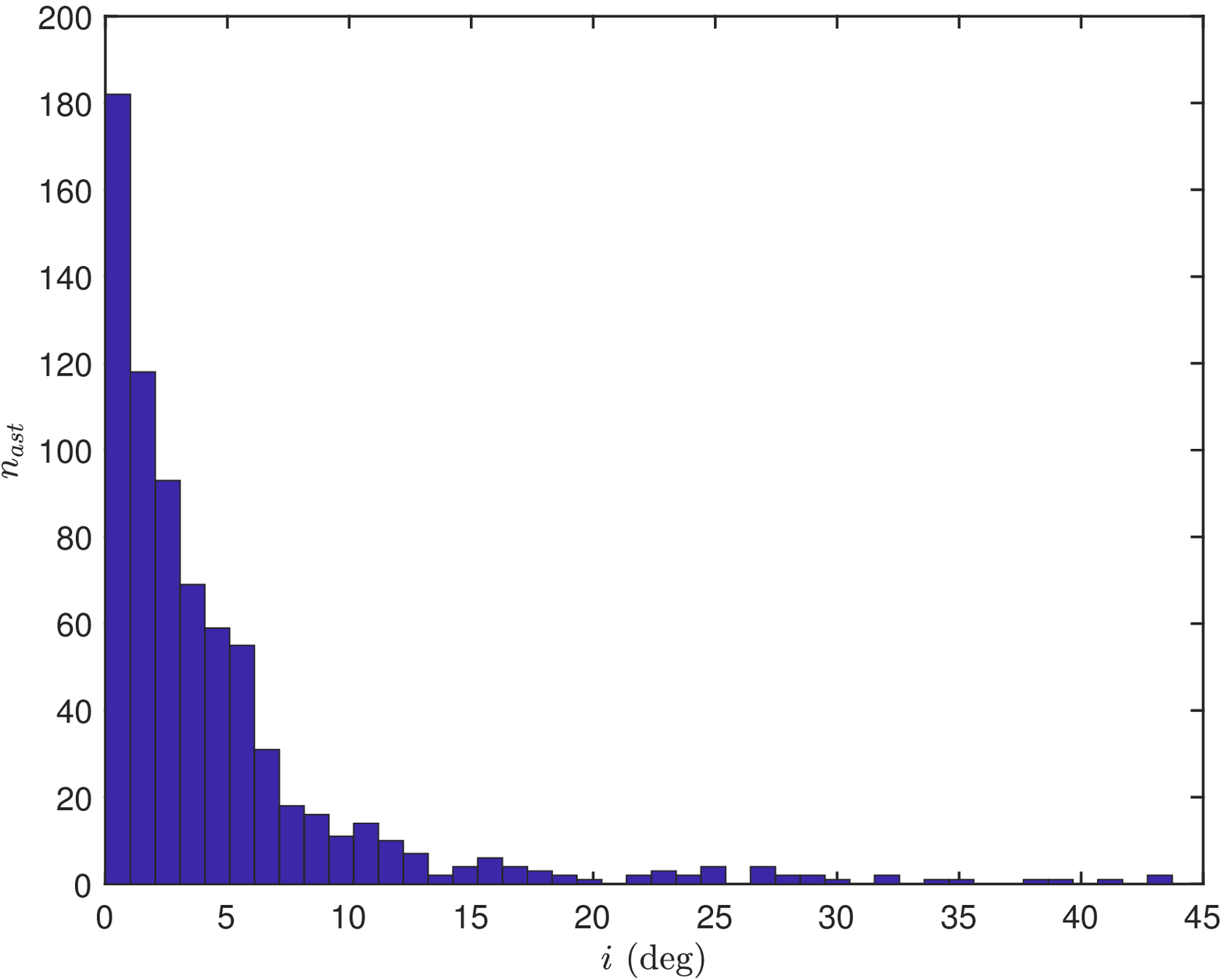}%
   \hspace{0.3cm}
  \includegraphics[width=0.475\columnwidth]{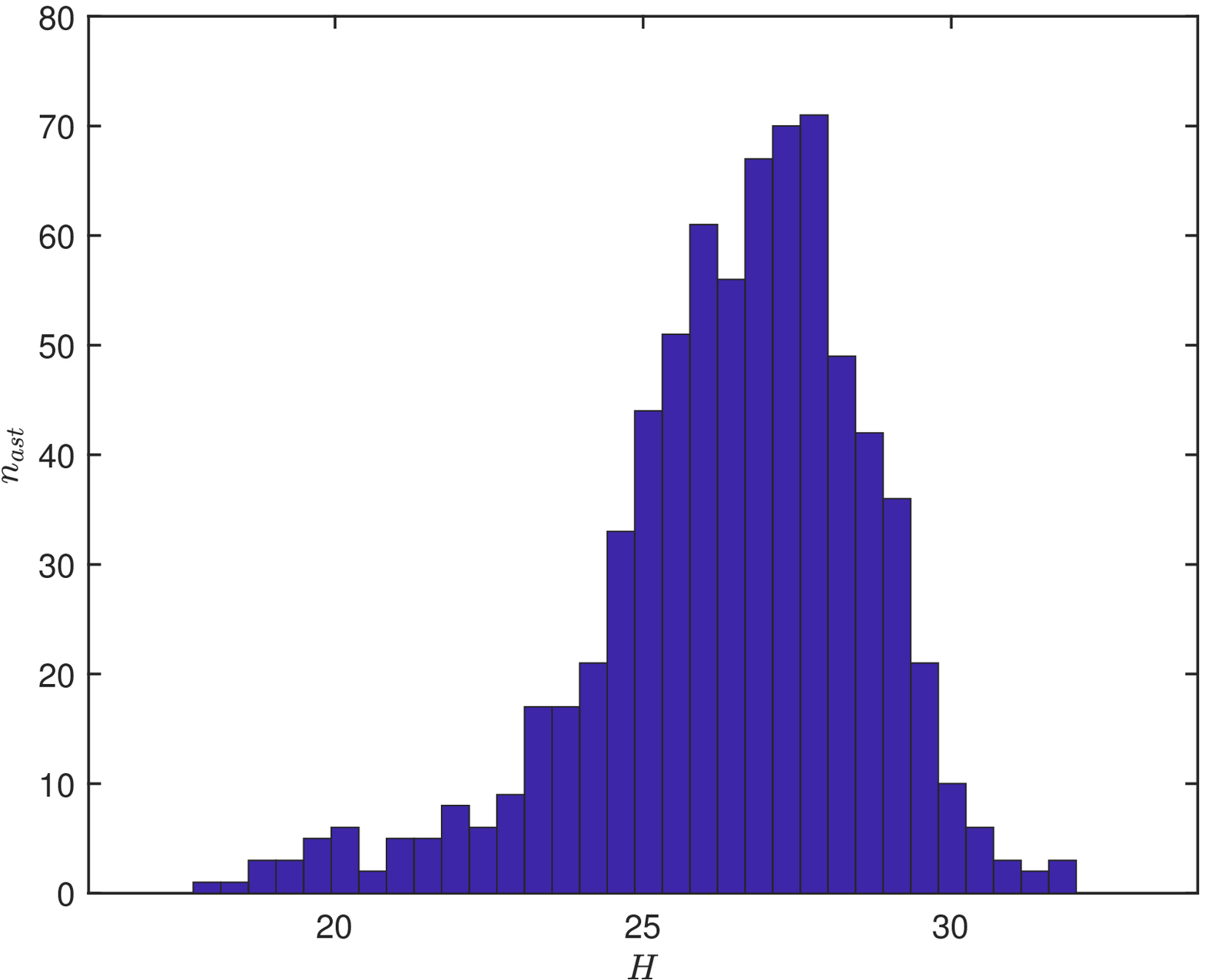}
  \caption{\emph{Left panel}. Histogram of the inclination $i$
    (deg). \emph{Right panel}. Histogram of the absolute magnitude
    $H$. Both the plots are referred to the set of $734$ asteroids in
    the NEODyS Risk List (as of April 2018).}
  \label{fig:hist_i_H}
\end{figure}

The sample of virtual impactors for the time evolution has to be
complete, that is it has to contain all the possible virtual impactors
with $IP$ down to a certain threshold. Thus the results of the new
sampling of the LOV are a good starting point, since the new method
ensures a complete virtual impactors search down to
$IP^*=1\cdot 10^{-7}$. To take into account the loss of completeness
due to singletons, as discussed in Section~\ref{sec:miss_VI}
and~\ref{sec:results}, we selected the virtual impactors with
$IP>2\cdot 10^{-7}$: the filtered set contains $6084$ virtual
impactors, corresponding to $473$ asteroids. We then applied a second
filter, considering the virtual impactors corresponding to
low-inclination asteroids, \emph{i.e.}, $i<5$~deg, since the discussion
below holds in an exact way in the planar case. In the end we analyzed
a sample of $5313$ virtual impactors with $IP>2\cdot 10^{-7}$ and
$i<5$~deg.

For a single asteroid, the accumulation of virtual impactors with time
depends on the time elapsed since the first observed close
approach. We call this relative time $t_{rel}$, assuming as origin
($t_{rel}=0$) the time of the first observed close approach. The exact
computation of $t_{rel}$ for each asteroid in the risk list would be
complicated, but, by taking into account Figure~\ref{fig:hist_i_H}
(right panel), we see that the vast majority of the asteroids in the
risk list is composed by very small objects. As a consequence, they
can only be discovered during a close approach. For almost all of the
asteroids in the risk list the center of the observed arc is thus a
good approximation of the origin $t_{rel}=0$. Thus, for each asteroid
in the risk list, we sorted the set of its virtual impactors by time
and we computed the relative time of each one of them with respect to
the center of the observed arc. Figure~\ref{fig:cum1} is the log-log
plot of the cumulative number of virtual impactors up to each value of
the relative time (blue and green marks). The log-log plot clearly
shows a linear growth, which we try to determine with a linear
fit. However, for a more accurate fit we have to cut out the tails of
the ostensible line. The tail for low relative times because its
contribution is weakened by small number statistics. In the tail for
high relative times the growth seems to slow down, but this is due to
the fact that the maximum relative time for which the scan for VIs has
been performed changes from asteroid to asteroid. Thus we performed
the linear fit over a suitable interval $t_1\leq t_{rel} \leq
t_2$. This fit corresponds to a power-law, that is
\begin{equation}
  \mathcal{N} = c_2 \cdot t_{rel}^\beta \quad \text{if }t_1\leq t_{rel} \leq t_2.
\end{equation}
Choosing $t_1=40$~yr and $t_2=99$~yr we obtained the estimation $\beta =
(3.001\pm 0.001)$, with a linear correlation coefficient $0.9994$. The
points marked with green circles in Figure~\ref{fig:cum1} are those
selected for the linear fit, and the orange straight line is the
resulting best-fit line.
The histogram in Figure~\ref{fig:cum2} is the histogram of the number
of virtual impactors accumulated up to the relative time. The plot
represents the same quantity of Figure~\ref{fig:cum1} with an
histogram, but not in a log-log scale. The colors have the same
meaning as for Figure~\ref{fig:cum1}: the green part corresponds to
the tail of the log-log plot used for the linear fit and the orange
line is the best fit power-law. Given the approximations introduced in
the model, this fit is remarkably good and identifies the power-law
proportional to $t_{rel}^3$ with very low uncertainty.

\begin{figure}[p!]
    \centering
    \includegraphics[width=0.85\columnwidth]{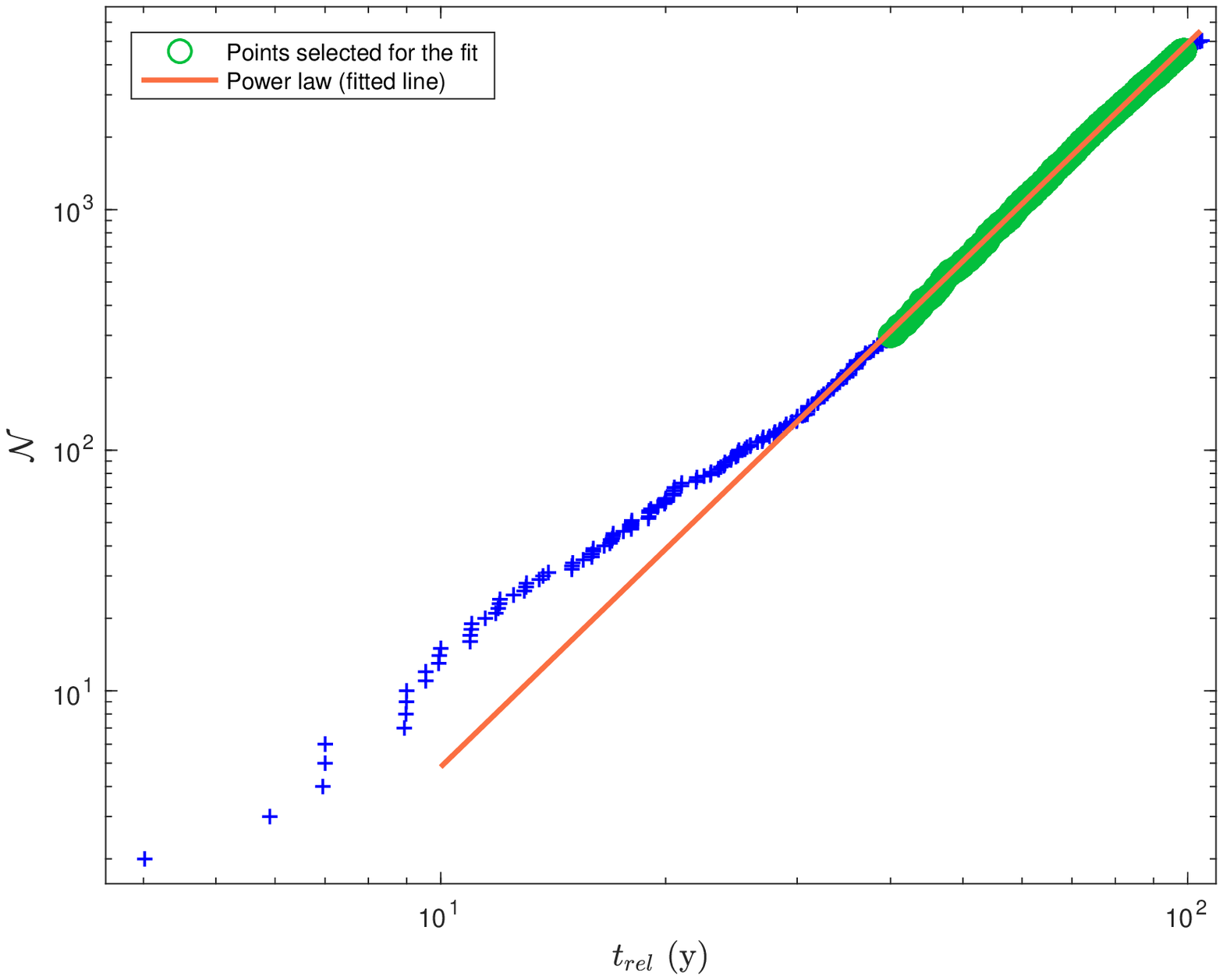}%
    \caption{Log-log plot of the cumulative number of virtual
      impactors $\mathcal{N}$ as a function of the relative time
      $t_{rel}$ (y). The points selected for the linear fit are marked
      with green circles. The orange straight line is the best-fit
      line obtained from the linear fit.}
    \label{fig:cum1}
    \vspace{0.7cm}
    \includegraphics[width=0.7\columnwidth]{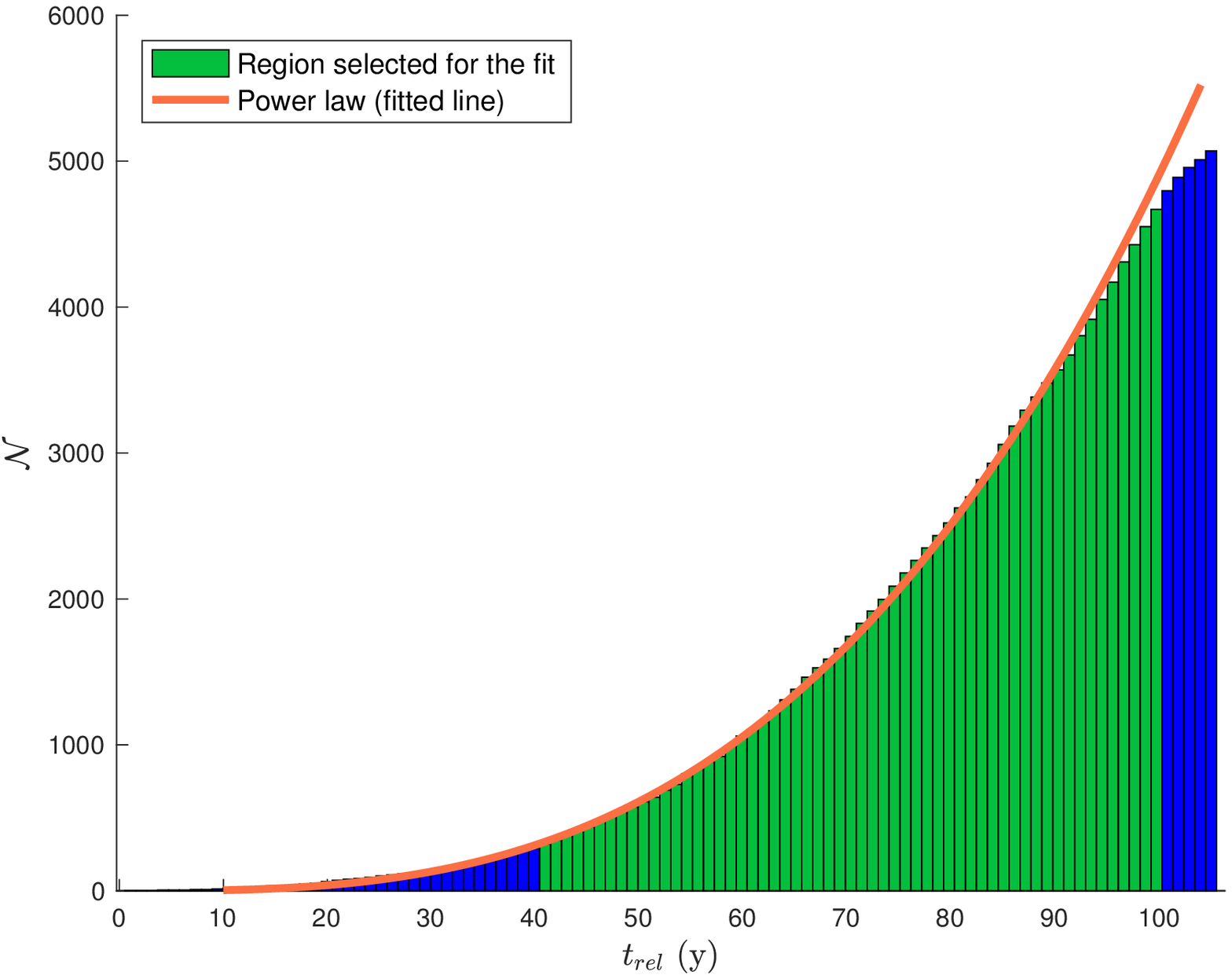}%
    \caption{Plot of the cumulative number of virtual impactors
      $\mathcal{N}$ as a function of the relative time $t_{rel}$ (y).}
    \label{fig:cum2}
\end{figure}

The power-law behavior with exponent $\simeq 3$ can be explained
using the analytical theory of close encounters as developed in
\cite{valsecchi:resret}, whose results agree with those of the
circular restricted three-body problem
\citep{valsecchi:cartography}. We consider the target plane
coordinates $\xi$ and $\zeta$: the former corresponds to the signed
local MOID, whereas the latter is related to the timing of the
encounter. We use the wire approximation
\citep{valsecchi:resret,milani:clomon2}, that is we assume that the
LOV projection on the target plane of a given encounter is a
continuous sequence of points, all with the same value of $\xi$ and
differing only for the value of $\zeta$.  Then, for each asteroid we
proceed along the same lines described in the appendix of
\cite{spoto:410777}.

\noindent The condition for a collision at a resonant return to take
place is that the ratio of the period of the small body and that of
the Earth is $k/h$, with $k$ and $h$ relatively prime integers. Then,
following the first encounter, after $h$ heliocentric revolutions of
the small body and $k$ revolutions of the Earth, both the Earth and
the small body will be back to the same position \citep{milani:AN10,
  valsecchi:resret}. This situation means that the post-encounter
semimajor axis $a'$ has to have precisely a certain value, say
$a'_\star$, with the corresponding mean motion $n'_\star$.  For the
Kepler's third law, the latter has to be
\begin{equation}\label{eq:resonance}
    n'_\star = \left(\frac{h}{k}\right)^{2/3}.
\end{equation}
As shown in \cite{valsecchi:resret} and in the appendix of
\cite{spoto:410777}, the values of $a'$ are constrained between a
maximum and a minimum, say $a'_{max}$ and $a'_{min}$, to which
correspond $n'_{min}$ and $n'_{max}$ respectively.  Consider now the
time interval in which we are interested: since $t_2 \simeq 99$~yr, it
is clear that we have to consider all the values of $n'_{min} \leq n'
\leq n'_{max}$ that, expressed as in \eqref{eq:resonance}, have
$k<99$. Thus the number of collision possibilities, \emph{i.e.}, of
virtual impactors, is proportional to the number of encounter
opportunities. This number accumulates in the same way as the number
of elements of $\mathcal{F}^{r,s}_n$, which is the set of irreducible
fractions between two integers $r$ and $s>r$, and whose denominators
do not exceed $n$, called Farey fractions
(see~\ref{app:farey}). Theorem~\ref{thm:farey4} states that the number
of elements of $\mathcal{F}^{r,s}_n$ grows like $n^2$ and that it
accumulates as $n^3$, which is the result highlighted from the fit of
Figure~\ref{fig:cum2}.

The above reasoning holds in an exact way, even for a single small
body, in the planar circular restricted three-body problem with Jacobi
constant $\mathcal{J}$ sufficiently high to ensure the small body will
not be expelled on a hyperbolic orbit, \emph{i.e.},
$\mathcal{J} > 2\sqrt{2}$ \citep{carusi82}. Of course the hypotheses
on which our analytical estimate is based are an approximation of the
more complex problem of asteroid close approaches, nevertheless the
general trend turns out to be confirmed by our statistical
analysis. If we add back to the list of asteroids with VIs (with
$IP>2\cdot 10^{-7}$) the ones with higher inclination, $i>5$~deg, the
fit for the slope in the log-log plot gives $\beta = 2.829$,
indicating that the model we have proposed can represent accurately
the statistics of the VIs time distribution only in the low
inclination case, as expected.

Note that the slope predicted by the Farey number-theoretical
arguments refers to the number of close approaches to the Earth, not
to the number of virtual impactors. The fact that the histogram
of VIs as a function of $t_{rel}$ follows the same power-law expected
for the number of close approaches indicates that, on average, the
number of virtual impactors is proportional to the number of
close approaches. This is by no means an obvious result. The
possibility of an impact during such a close approach is controlled by
the MOID (Minimum Orbital Intersection Distance) at the time of the
encounter: if the MOID is larger than the radius $b_\Earth$ of the
Earth impact cross section, collisions cannot occur. The MOID changes
both as a consequence of short-periodic perturbations and because of
secular perturbations slowly changing the MOID through the Lidov-Kozai
cycle \citep{gronchi:proper_crossing}. The empirical finding that the
impact probability and the probability of a close approach are
proportional, at least as a mean over thousands of cases, would
indicate that the MOID can be modeled as a random variable. However, a
rigorous mathematical formulation for this is not yet available.

\section{Comparison with JPL results}
\label{sec:JPL}

We performed a global comparison between the results of \clomon-2 and
Sentry. An asteroid-by-asteroid comparison is beyond the scope of this
paper, thus we present a statistical comparison using histograms like
those shown in Section~\ref{sec:results}. In particular, we take the
ensemble of all the virtual impactors computed by \clomon-2 and Sentry
at the same epoch (April 2018). Figure~\ref{fig:hist_CLOMON} refers to
the results of \clomon-2 and Figure~\ref{fig:hist_Sentry} to the ones of
Sentry: both plots represent the number of virtual impactors
$\mathcal{N}$ as a function of the inverse of the impact probability
$IP$.

\begin{figure}[t!]
    \centering
    \includegraphics[width=0.7\columnwidth]{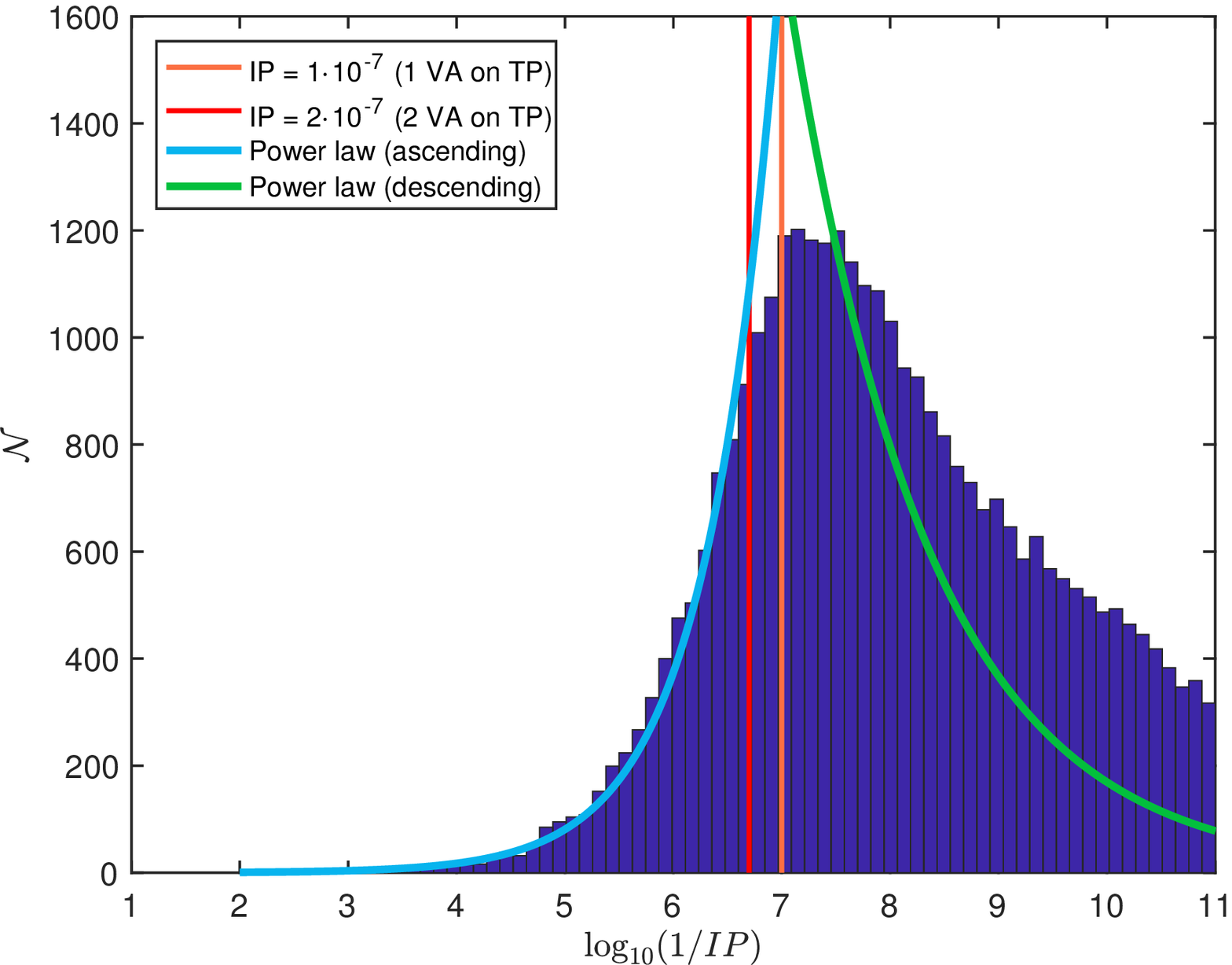}%
    \caption{Histogram of the number of virtual impactors in the
      NEODyS Risk List as a function of the inverse of the impact
      probability (as of April 2018).}
    \label{fig:hist_CLOMON}
    \vspace{0.5cm}
    \includegraphics[width=0.7\columnwidth]{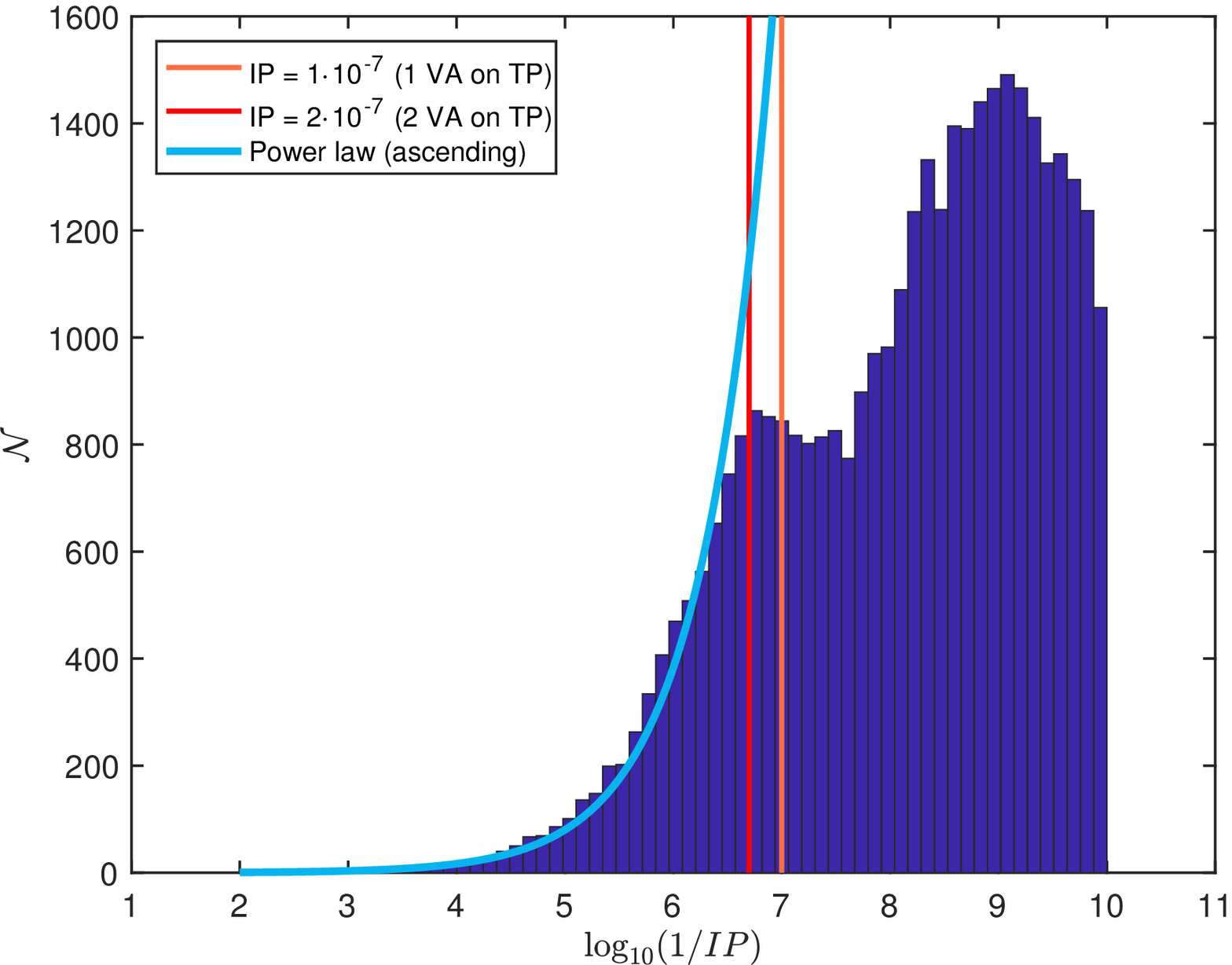}%
    \caption{Histogram of the number of virtual impactors in the
      Sentry Risk List as a function of the inverse of the impact
      probability (as of April 2018).}
    \label{fig:hist_Sentry}
\end{figure}

Both plots show a very good agreement in the ascending part up to
$IP\simeq 2\cdot IP^* = 2\cdot 10^{-7}$ (vertical red line). To
strengthen this argument we also performed a linear fit of the
histogram contour for $IP>IP^*$, as in
equation~\eqref{eq:ascending}. The exponent of the power-law resulted
to be $\alpha\simeq 0.664$ for the \clomon-2 results and $\alpha\simeq
0.679$ for the Sentry results. The number of VIs obtained by Sentry at
$IP\simeq 2\cdot IP^*$ is somewhat lower than our number and it is the
cause for the difference in the values of the multiplicative constant
$c_1$ of equation~\eqref{eq:ascending} obtained from both fits.

\noindent For much lower impact probabilities, the two plots show some
differences. The loss of efficiency in the region between the two
vertical lines is a common feature, but our histogram is increasing
whereas the one related to Sentry begins to slightly decrease. Overall
the behavior of the two systems in the detection of VIs with
$IP>IP^*$ is very well consistent, which was one of the goals of the
improvements in our system. The biggest difference is in the right
parts of the histograms: our plot is strictly decreasing,
corresponding to the fact that the number of virtual impactors grows
but the probability to detect them simultaneously decreases, whereas
the histogram related to Sentry shows a peak around $IP\simeq
10^{-9}$. This behavior might be explained by differences in the
computation techniques, in particular in the treatment of the off-LOV
virtual impactors \citep{milani:clomon2}, which usually have very low
impact probabilities. We plan to work on these differences in
collaboration with the JPL team.

In conclusion, the global comparison confirms a very good agreement
between the two systems. The differences we found are mostly explained
by technicalities of the methods used in the impact monitoring
computations and anyway mostly regard very low probability VIs. Of
course this does not exclude differences, in particular in the
computed $IP$ for each virtual impactor, which necessarily arise
because we are currently using two different error models for the
astrometric observations, also taking into account that these error
models are incomplete for lack of metadata. The fact that the results
for $IP>IP^*$ are very similar in terms of the number of VIs found is
a very significant result, precisely because two different error
models have been used.

\section{Conclusion and future work}
\label{sec:conc}

In this paper we reported on two improvements of our impact monitoring
system \clomon-2 with respect to what was described in
\citep{milani:clomon2}. The first one was a correct handling of the
cases in which a return on the target plane of the Earth includes two
instances of the same virtual asteroid. This was done by using a
recursive splitting of the showers, thus radically eliminating these
duplications. The second was to decrease the impact probability
corresponding to the generic completeness, which was previously $\simeq
4\cdot 10^{-7}$, to $1\cdot 10^{-7}$. We did not achieve this result
by brute force, that is by using four times more virtual asteroids,
but by using a sampling of the Line Of Variations optimized by a
uniform probability for each segment, at least for the portion closer
to the nominal solution.

\noindent Both improvements have been implemented in the operational
software and fully tested, by recomputing the entire risk list, that
is the asteroids known to have VIs. Note that both the improvements we
have implemented were removing differences between the algorithm used
in the Sentry system of JPL and our \clomon-2. Thus, having
implemented these two improvements, we were for the first time able to
perform a full statistical comparison between the global results of
\clomon-2 and those of Sentry, since the two systems should now be
giving more similar results.

When we first produced this kind of histograms of the number of VIs
found in all the risk list asteroids, as a function of variables such
as $1/IP$ and the stretching $S$, we found empirically that the number
of detected virtual impactors with $IP>IP^*$ appeared to grow
according to a power-law, proportional to $IP^{-2/3}$. As shown in the
figures of this paper, we have tested that this result, numerically
quite accurate, was obtained with the risk list as computed with
different sampling of the LOV and with different values of $IP^*$
(compare Figure~\ref{fig:cum1} and Figure~\ref{fig:cum2}), as computed
at different dates (compare Figure~\ref{fig:cum2} and
Figure~\ref{fig:hist_CLOMON}), and for risk lists at the same date but
computed with different software and different astrometric error
models (compare Figure~\ref{fig:hist_CLOMON} and
Figure~\ref{fig:hist_Sentry}). Thus we are lead to believe that we
have experimentally found a fractal property of the set of the initial
conditions leading to impacts in the chaotic dynamical system of
planet crossing asteroids. We must admit we do not yet have a model
explaining this power-law. We suspect it is related to the power-law
by which the cumulative number of VIs within a time $t_{rel}$ from the
first observed close approach grows proportional to the power-law
$t_{rel}^3$, for which we have found a number-theoretical
argument. However, for the connection between the two power-laws we
have not yet found a model, which we suspect to hide in properties of
the chaotic orbits of near-Earth asteroids.

\noindent Still, the use of the empirical law $IP^{-2/3}$ allows us to
explore, for the first time, the effective completeness of the impact
monitoring systems: of course this completeness cannot extend beyond
the generic completeness limit $IP^*$. The results are encouraging, in
that both \clomon-2 and Sentry are not just statistically consistent,
but also consistent with the empirical power-law, down to an $IP\simeq
2\cdot 10^{-7}$. This is a significant achievement, because we never
had a ``ground truth'' against which to assess our performance.  Since
the two impact monitoring systems are currently using two different
error models for astrometric observations, by using the standard
argument that the difference between the last two models can be
considered an estimate of the inaccuracies remaining in the last one,
this indicates also robustness of our results with respect to the
astrometric error model.

The discussion above clearly indicates the directions we should move
in our future work. First, we would like to close the gap between the
generic completeness at $IP^*=10^{-7}$ and the effective completeness
(resulting from the comparison with the power-law) at $IP\simeq 2\cdot
10^{-7}$. In principle we know how to do this by densification of the
cases in which a return contains too few points on the target plane,
but we would like to find a solution which is not brute force and this
requires some effort, but appears feasible. Second, we need to
investigate in depth the issue of the VI histogram power-law, to
understand if indeed it is a fractal property and provide at least an
approximate model explaining it, possibly starting from the success in
explaining the power-law with respect to time. This requires some new
idea, thus we are not able to claim that we shall solve this problem,
but we shall try. Also other researchers are welcome to try.

\section*{Acknowledgments}
A.~Del Vigna and A.~Chessa acknowledges support by the company
SpaceDyS. Part of this research was conducted under European Space
Agency contract No. 4000113555/15/DMRP ``P2-NEO-II Improved NEO Data
Processing Capabilities''.

We thank NASA-JPL for making available on their web risk page the data
on the virtual impactors found by Sentry, and also for the assistance
in downloading the full list. We thank Davide Farnocchia (JPL) for
useful discussions, in particular for the comparison between our
system and theirs.

We thank the referees (S. Chesley and an anonymous) for suggesting
improvements which we think have made easier the understanding of this
paper, including the new Figures~\ref{fig:shower_2000SG344_bef}
and~\ref{fig:split_ret}.

\appendix
\section{Decomposition of returns with duplications}
\label{app:dec_subret}

In this appendix we provide a detailed description of the procedure to
decompose a return with duplications into sub-returns, giving a
mathematical proof of completion.

A return $\calR$ is given by a contiguous LOV segment, that is the
indices of the corresponding virtual asteroids are consecutive. Let
$\calI_\calR$ be this set of indices. Let $n_{\mathcal R}$ the number
of distinct close approaches of the return $\calR$. We rigorously
define the return to be
\[
    \mathcal R \coloneq \{ (i_k,t_k) \,:\, i_k\in
    \calI_\calR\}_{k=1,\,\ldots,\,n_{\calR}},
\]
considering $\calR$ as the set of couples given by the LOV index and
the corresponding closest approach time. We also assume that the
sequence of times $(t_k)_{k=1,\,\ldots,\,n_{\calR}}$ is
non-decreasing, that is $t_{k+1}\geq t_k$ for all $k$. We now suppose
that the return contains a duplication, that is there exist
$k_1,k_2\in \{1,\,\ldots,\,n_{\calR}\}$ such that $k_1\neq k_2$ and
$i_{k_1}=i_{k_2}$, in such a way that the return has to be further
decomposed.

For $1\leq s\leq n_{\calR}$ define $I_s\coloneq \{i_k \,:\, 1\leq
k<s\}$. We now want to recursively define the sequence $(s_n)_{n\geq
  1}$ of the beginning points of the sub-showers. Let $s_1=1$ and, for
$n\geq 0$ define
\[
    s_{n+1} = \min_{s_n<s\leq n_{\calR}} \{s \,:\, i_s \in
    I_s\setminus I_{s_n}\}
\]
or $s_{n+1}=n_{\calR}$ in case the minimum does not exist because the
set is empty. By definition $(s_n)_{n\geq 1}$ is a non-decreasing
sequence. Since $s_n\leq N$ for all $n\geq 1$, there exists $n_s$ such
that
\[
    s_1 < s_2 < \cdots < s_{n_s}\quad\text{and}\quad s_n=n_\calR\
    \text{for } n>n_s.
\]
The integer $n_s$ is the number of sub-showers in $\calR$. For $1\leq
n\leq n_s$ let $\calI_n\coloneq I_{s_{n+1}}\setminus I_{s_n}$ be the
set of the indices of each sub-shower. Note that the indices in
$\calI_n$ are pairwise distinct by construction. Moreover, the
collection $\{\calI_n\}_{1\leq n\leq n_s}$ is a partition of
$\{1,\,\ldots,\,n_\calR\}$. The sub-showers of $\calR$ are thus
defined to be
\[
    \calS_n \coloneq \{(i_k,t_k) \,:\, k\in \calI_n\}
\]
for $1\leq n\leq n_s$ and
\[
    \calR = \bigsqcup_{n=1}^{n_s} \calS_n.
\]
There is no guarantee that the indices in a sub-showers are
consecutive, thus each sub-shower is further decomposed in sub-returns
by taking contiguous LOV segments. Rigorously, for each $1\leq n\leq
n_s$ there exist an integer $r_n\geq 0$ and a collection of $r_n$
subsets $\{\calR_{n,m}\}_{m=1,\,\ldots,\,r_n}\subset \calS_n$ such
that the indices of $\calR_{n,m}$ are a maximal set of consecutive
numbers among the indices of $\calS_n$ and such that
$\calS_n=\bigsqcup_{m=1}^{r_n} \calR_{n,m}$. In this way we obtain
\[
    \calR = \bigsqcup_{n=1}^{n_s} \bigsqcup_{m=1}^{r_n} \calR_{n,m},
\]
the decomposition of $\calR$ in sub-returns. Given this formal
procedure, since the initial number of close approaches contained in
starting return is finite, it is proven that the procedure has a
finite number of steps, at the end of which there are no duplicate
points.

\section{The Farey sequence}
\label{app:farey}
The \emph{Farey sequence} $\mathcal{F}_n$ of order $n$ is the
ascending sequence of irreducible fractions between $0$ and $1$ whose
denominators do not exceed $n$. That is, $\frac hk \in \mathcal{F}_n$
if $0\leq h\leq k\leq n$ and $(h,k)=1$\footnote{The symbol $(h,k)$,
  with $h$ and $k$ integers, denotes their greatest common divisor.},
with the numbers $0$ and $1$ included in the form $\frac 01$ and
$\frac 11$. For instance, the elements of $\mathcal{F}_5$ are the
following:
\[
    \frac 01,\ \frac 15,\ \frac 14,\ \frac 13,\ \frac 25,\ \frac 12,\
    \frac 35,\ \frac 23,\ \frac 34,\ \frac 45,\ \frac 11.
\]
By definition $\mathcal{F}_{n-1} \subseteq \mathcal{F}_{n}$ for all
$n>1$, and the smallest Farey sequence is that of order $n=1$
\[
  \mathcal{F}_1 = \left\{ \frac 01, \frac 10\right\}.
\]

The property of the Farey sequence which is important for the result
in Section~\ref{sec:nVI_time} concerns the asymptotic behavior of the
number of elements of $\mathcal{F}_n$. Its proof is can be found in
\cite{hardy_wright}. First we have to recall the definition of the
\emph{Euler totient function} $\varphi$. For each integer $n\geq 1$ we
define $\varphi(n)$ as the number of positive integers $\leq n$ and
relatively prime with $n$. For instance, $\varphi(1) = 1$, $\varphi(6) = 2$,
$\varphi(33) = 20$. There is a simple equation to evaluate $\varphi$:
\[
  \varphi(n)=n\prod_{p\mid n} \left(1-\frac{1}{p}\right),
\]
where the product is extended to all the prime numbers that divide
$n$.

\begin{thm}\label{thm:farey1}
  The length of the Farey sequence of order $n$ is
  \[
    |\mathcal{F}_n| = 1 + \sum_{k=1}^n \varphi(k).
  \]
\end{thm}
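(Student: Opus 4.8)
The plan is to count the elements of $\mathcal{F}_n$ by grouping them according to their denominator when written in lowest terms. First I would separate off the single element $0=\tfrac01$, which belongs to every Farey sequence and will account for the additive constant $1$ in the formula. Every other element of $\mathcal{F}_n$ is an irreducible fraction $\tfrac hk$ with $0<\tfrac hk\leq 1$, and hence satisfies $1\leq h\leq k\leq n$ together with $(h,k)=1$.

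The key observation is that the lowest-terms representation of a rational number is unique, so the denominator $k$ is a well-defined function of the fraction. I would therefore partition $\mathcal{F}_n\setminus\{\tfrac01\}$ according to the value of $k$, which ranges over $1,\dots,n$. For a fixed denominator $k$, the elements of $\mathcal{F}_n$ with that denominator are exactly the fractions $\tfrac hk$ with $1\leq h\leq k$ and $(h,k)=1$; by the very definition of the Euler totient function there are precisely $\varphi(k)$ admissible values of $h$. For $k\geq 2$ the condition $(h,k)=1$ already forces $h<k$, so the extreme value $h=k$ is automatically excluded, whereas for $k=1$ the unique admissible fraction is $\tfrac11=1$, consistently counted by $\varphi(1)=1$.

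Summing the sizes of these pairwise disjoint classes over $k=1,\dots,n$ gives $\sum_{k=1}^n\varphi(k)$ for the fractions lying in $(0,1]$, and adding back the element $\tfrac01$ yields the claimed identity. As a sanity check I would compare against the explicit list for $\mathcal{F}_5$, where $\varphi(1)+\varphi(2)+\varphi(3)+\varphi(4)+\varphi(5)=1+1+2+2+4=10$ and $10+1=11$ matches the eleven listed fractions.

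I expect the only delicate point to be the bookkeeping at the two boundary fractions $\tfrac01$ and $\tfrac11$: one must verify that $\tfrac01$ is counted exactly once, namely by the additive $1$ and by no $\varphi(k)$ term, and that $\tfrac11$ is captured precisely by the $k=1$ contribution without any double-counting. Once uniqueness of the lowest-terms representation is invoked to make the partition-by-denominator well-defined, the remainder is a routine summation.
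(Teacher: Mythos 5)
Your proof is correct. The paper does not actually prove this statement---it simply cites Hardy and Wright---and your argument (partition the nonzero elements by their lowest-terms denominator $k$, count $\varphi(k)$ fractions for each $k$, and add $1$ for $\tfrac{0}{1}$) is precisely the standard proof found there, with the boundary cases $\tfrac{0}{1}$ and $\tfrac{1}{1}$ handled correctly.
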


The following theorem gives the asymptotic behavior of
$|\mathcal{F}_n|$ as $n\rightarrow +\infty$. It can be proved by
Theorem~\ref{thm:farey1} and \cite[Theorem~330]{hardy_wright}.

\begin{thm}\label{thm:farey2}
  As $n\rightarrow +\infty$
  \[
    |\mathcal{F}_n| \sim \frac{3}{\pi^2}n^2.
  \]
\end{thm}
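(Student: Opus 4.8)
The plan is to reduce the statement to the classical estimate for the average order of the Euler totient function, which is exactly \cite[Theorem~330]{hardy_wright}. By Theorem~\ref{thm:farey1} we have $|\mathcal{F}_n| = 1 + \sum_{k=1}^{n}\varphi(k)$, and since the additive constant is negligible in the limit, it suffices to prove that the summatory function $\Phi(n) \coloneq \sum_{k=1}^{n}\varphi(k)$ satisfies $\Phi(n)\sim \frac{3}{\pi^2}n^2$.

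First I would express $\varphi$ through the Möbius function. Starting from the elementary identity $\sum_{d\mid k}\varphi(d)=k$, Möbius inversion gives $\varphi(k)=\sum_{d\mid k}\mu(d)\,\frac{k}{d}$. Substituting this into $\Phi(n)$ and writing $k=dm$, the resulting double sum can be rearranged as
\[
\Phi(n)=\sum_{d=1}^{n}\mu(d)\sum_{m=1}^{\lfloor n/d\rfloor} m = \frac12\sum_{d=1}^{n}\mu(d)\left\lfloor \frac{n}{d}\right\rfloor\left(\left\lfloor \frac{n}{d}\right\rfloor+1\right).
\]

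Second, I would replace each inner triangular number by its leading term. Using $\lfloor n/d\rfloor = n/d + O(1)$ one gets $\frac12\lfloor n/d\rfloor(\lfloor n/d\rfloor+1)=\frac{n^2}{2d^2}+O(n/d)$, so that
\[
\Phi(n)=\frac{n^2}{2}\sum_{d=1}^{n}\frac{\mu(d)}{d^2}+O\!\left(n\sum_{d=1}^{n}\frac1d\right)=\frac{n^2}{2}\sum_{d=1}^{n}\frac{\mu(d)}{d^2}+O(n\log n).
\]
To finish, I would extend the Möbius sum to infinity, controlling the tail by $\left|\sum_{d>n}\mu(d)/d^2\right|\le \sum_{d>n}1/d^2 = O(1/n)$, whose contribution to $\Phi(n)$ is only $O(n)$. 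It then remains to identify $\sum_{d=1}^{\infty}\mu(d)/d^2=1/\zeta(2)=6/\pi^2$, via the Euler-product identity $\sum_{d\ge 1}\mu(d)d^{-s}=1/\zeta(s)$ together with $\zeta(2)=\pi^2/6$. Combining everything yields $\Phi(n)=\frac{3}{\pi^2}n^2+O(n\log n)$, and hence $|\mathcal{F}_n|\sim\frac{3}{\pi^2}n^2$.

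I expect the main obstacle to be this last step, namely justifying the passage to the infinite Möbius series and pinning down the constant $6/\pi^2$; this is precisely where \cite[Theorem~330]{hardy_wright} is invoked, and it is the only genuinely non-elementary input, ultimately resting on the evaluation $\zeta(2)=\pi^2/6$. By contrast, the Möbius inversion, the rearrangement of the double sum, and the floor-function error estimates are routine, but they must be organized carefully so that every error term stays of order strictly below $n^2$ and therefore disappears in the asymptotic relation.
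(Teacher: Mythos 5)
Your proposal is correct and follows the same route as the paper: the paper proves Theorem~\ref{thm:farey2} simply by combining Theorem~\ref{thm:farey1} with the classical average-order estimate $\sum_{k\le n}\varphi(k)\sim \frac{3}{\pi^2}n^2$ of \cite[Theorem~330]{hardy_wright}, which is exactly your reduction. The only difference is that you additionally unpack the standard M\"obius-inversion proof of that cited estimate, and your error bookkeeping ($O(n\log n)$ main error, $O(n)$ tail) is sound.
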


Using the Stoltz-Cesaro theorem we can prove the following statement
on the accumulation of $|\mathcal{F}_n|$.
\begin{thm}\label{thm:farey3}
    As $n\rightarrow +\infty$
    \[
      \sum_{k=1}^n |\mathcal{F}_k| \sim \frac{1}{\pi^2}n^3.
    \]
\end{thm}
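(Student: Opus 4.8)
The plan is to apply the Stolz--Cesàro theorem to the sequences $a_n \coloneq \sum_{k=1}^n |\mathcal{F}_k|$ and $b_n \coloneq n^3$. The sequence $b_n$ is strictly increasing and diverges to $+\infty$, so the hypotheses of Stolz--Cesàro are met: the theorem then guarantees that
\[
  \lim_{n\to\infty} \frac{a_n}{b_n} = \lim_{n\to\infty} \frac{a_{n+1}-a_n}{b_{n+1}-b_n},
\]
provided the limit on the right-hand side exists. Thus the whole proof reduces to evaluating this ratio of increments.

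First I would compute the two increments. Since $a_n$ is a partial sum, the numerator telescopes to $a_{n+1}-a_n = |\mathcal{F}_{n+1}|$, while the denominator is the elementary difference $b_{n+1}-b_n = (n+1)^3 - n^3 = 3n^2 + 3n + 1$. Next I would invoke Theorem~\ref{thm:farey2}, which gives $|\mathcal{F}_{n+1}| \sim \frac{3}{\pi^2}(n+1)^2 \sim \frac{3}{\pi^2}n^2$. Combining this with $3n^2 + 3n + 1 \sim 3n^2$, the increment ratio satisfies
\[
  \frac{|\mathcal{F}_{n+1}|}{3n^2 + 3n + 1} \longrightarrow \frac{1}{\pi^2}
\]
as $n\to\infty$. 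Since this limit exists, Stolz--Cesàro yields $a_n/n^3 \to 1/\pi^2$, which is precisely the claimed asymptotic $\sum_{k=1}^n |\mathcal{F}_k| \sim \frac{1}{\pi^2}n^3$.

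The argument is essentially mechanical once the quadratic asymptotic for $|\mathcal{F}_n|$ is available, so I do not expect a serious obstacle. The one point deserving care is the logical direction of the Stolz--Cesàro implication: the theorem only permits transferring a limit \emph{from} the ratio of increments \emph{to} the ratio of the sequences, not conversely. Hence I would make sure to establish that the increment-ratio limit genuinely exists before drawing the conclusion, which it does thanks to Theorem~\ref{thm:farey2}.
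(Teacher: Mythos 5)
Your proposal is correct and follows exactly the paper's own argument: apply the Stolz--Ces\`aro theorem to $a_n=\sum_{k=1}^n|\mathcal{F}_k|$ and $b_n=n^3$, reduce to the increment ratio $|\mathcal{F}_{n+1}|/(3n^2+3n+1)$, and conclude via the asymptotic $|\mathcal{F}_{n+1}|\sim\frac{3}{\pi^2}(n+1)^2$ from Theorem~\ref{thm:farey2}. Your remark about the one-directional nature of Stolz--Ces\`aro is a welcome extra precision that the paper leaves implicit.
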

\begin{proof}
    The sequence $b_n=n^3$ is increasing and $\lim_{n\rightarrow
      +\infty} b_n = +\infty$. The thesis follows from
    Theorem~\ref{thm:farey2} and from the Stoltz-Cesaro theorem
    since
    \[
    \dfrac{\sum_{k=1}^{n+1} |\mathcal{F}_k|-\sum_{k=1}^n
    |\mathcal{F}_k|}{(n+1)^3-n^3} =
    \frac{|\mathcal{F}_{n+1}|}{3n^2+3n+1} =
    \frac{|\mathcal{F}_{n+1}|}{(n+1)^2}\frac{(n+1)^2}{3n^2+3n+1}
    \rightarrow \frac{1}{\pi^2}.
    \]
\end{proof}

Analogous results hold in case we consider the irreducible fractions
between any two integers $r$ and $s>r$ and with denominator not
exceeding $n$. Let us denote that set of fraction with
$\mathcal{F}_n^{r,s}$.
\begin{lemma}
    Each unit interval $[\ell,\ell+1]$ contains exactly as many
    irreducible fractions with denominator not exceeding $n$ as
    $[0,1]$. That is, $|\mathcal{F}_n| =
    |\mathcal{F}_n^{\ell,\ell+1}|$.
\end{lemma}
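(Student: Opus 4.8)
The plan is to exhibit an explicit bijection between $\mathcal{F}_n$ and $\mathcal{F}_n^{\ell,\ell+1}$ given by translation by the integer $\ell$. Concretely, I would define the map sending a reduced fraction $h/k$ to $h/k + \ell = (h+\ell k)/k$, and show that it restricts to a bijection from the irreducible fractions in $[0,1]$ with denominator at most $n$ onto the irreducible fractions in $[\ell,\ell+1]$ with denominator at most $n$.

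First I would check that the map lands in the correct target set. If $h/k \in [0,1]$ then trivially $h/k + \ell \in [\ell,\ell+1]$. The key step is to verify that the denominator is preserved and the representative stays irreducible: since the numerator satisfies $h + \ell k \equiv h \pmod{k}$, one has $\gcd(h+\ell k,\, k) = \gcd(h,k) = 1$, so $(h+\ell k)/k$ is already in lowest terms with denominator $k \le n$. This is the crux of the argument and the only place where the integrality of $\ell$ is genuinely used.

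Next I would write the inverse map as translation by $-\ell$, namely $p/q \mapsto p/q - \ell = (p-\ell q)/q$, and verify by the same congruence argument that it carries $\mathcal{F}_n^{\ell,\ell+1}$ back into $\mathcal{F}_n$. Since the two translations are mutually inverse on the real line, they are mutually inverse on these finite sets, which establishes the bijection. The endpoints match up correctly as well, since $0/1 \mapsto \ell/1$ and $1/1 \mapsto (\ell+1)/1$, both integers represented with denominator $1$, so no element is gained or lost at the boundary.

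I do not anticipate a genuine obstacle here; the only point requiring care is the invariance of the greatest common divisor under adding an integer multiple of the denominator to the numerator, which is immediate from the congruence above. Counting the two finite sets through this bijection then yields $|\mathcal{F}_n| = |\mathcal{F}_n^{\ell,\ell+1}|$, as claimed.
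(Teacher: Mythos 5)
Your proof is correct and takes essentially the same route as the paper's: both exhibit translation by the integer $\ell$ as a bijection between $\mathcal{F}_n$ and $\mathcal{F}_n^{\ell,\ell+1}$, with irreducibility preserved because $\gcd(h+\ell k,\,k)=\gcd(h,k)$. Your version actually spells out the congruence argument and the inverse map more explicitly than the paper does, so there is nothing to add.
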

\begin{proof}
    It suffices to prove that there exists a bijection between
    $\mathcal{F}_n$ and $\mathcal{F}_n^{\ell,\ell+1}$. If $\frac hk\in
    \mathcal{F}_n$, then consider $\frac hk + \ell$: it belongs to
    $[\ell,\ell+1]$ and it is irreducible since $(h,h+k\ell)=1$
    (otherwise $(h,k)>1$). Thus $\frac hk+\ell\in
    \mathcal{F}_n^{\ell,\ell+1}$. Analogously, if $\frac hk\in
    \mathcal{F}_n^{\ell,\ell+1}$ then $\frac hk-\ell$ belongs to
    $[0,1]$ and it is irreducible.
\end{proof}

\begin{thm}\label{thm:farey4}
    For $s>r$ integers, as $n\rightarrow +\infty$
    \[
    |\mathcal{F}_n^{r,s}| \sim \frac{3(s-r)}{\pi^2}n^2
    \quad\text{and}\quad
    \sum_{k=1}^n |\mathcal{F}_k^{r,s}| \sim \frac{s-r}{\pi^2}n^3.
    \]
\end{thm}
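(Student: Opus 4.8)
The plan is to reduce Theorem~\ref{thm:farey4} to the already-established results of Theorem~\ref{thm:farey2} and Theorem~\ref{thm:farey3} by exploiting the additivity of the counting function over unit intervals, using the bijection supplied by the preceding Lemma. The key observation is that the interval $[r,s]$ with $r,s$ integers decomposes as a disjoint union of exactly $s-r$ unit subintervals $[\ell,\ell+1]$ for $\ell=r,\,r+1,\,\ldots,\,s-1$. Since the irreducible fractions with denominator not exceeding $n$ that lie strictly inside each unit interval are disjoint across these subintervals, and the endpoints (integers) contribute only a bounded number of terms that is negligible in the asymptotics, the count $|\mathcal{F}_n^{r,s}|$ is, up to a bounded additive error, the sum of the counts over the $s-r$ unit pieces.

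First I would make precise the decomposition $|\mathcal{F}_n^{r,s}| = \sum_{\ell=r}^{s-1} |\mathcal{F}_n^{\ell,\ell+1}| + O(1)$, where the $O(1)$ accounts for the integer endpoints being shared or double-counted between adjacent intervals; this error term does not grow with $n$. Then I would invoke the Lemma, which asserts $|\mathcal{F}_n^{\ell,\ell+1}| = |\mathcal{F}_n|$ for every integer $\ell$. Substituting, we obtain $|\mathcal{F}_n^{r,s}| = (s-r)\,|\mathcal{F}_n| + O(1)$. Applying Theorem~\ref{thm:farey2}, namely $|\mathcal{F}_n| \sim \tfrac{3}{\pi^2}n^2$, and absorbing the bounded error into the asymptotic, yields the first claim
\[
    |\mathcal{F}_n^{r,s}| \sim \frac{3(s-r)}{\pi^2}n^2.
\]

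For the second claim on the accumulation, I would sum the relation $|\mathcal{F}_k^{r,s}| = (s-r)\,|\mathcal{F}_k| + O(1)$ over $k=1,\,\ldots,\,n$. This gives $\sum_{k=1}^n |\mathcal{F}_k^{r,s}| = (s-r)\sum_{k=1}^n |\mathcal{F}_k| + O(n)$, where the accumulated error is $O(n)$ and hence negligible against the cubic growth. Theorem~\ref{thm:farey3} supplies $\sum_{k=1}^n |\mathcal{F}_k| \sim \tfrac{1}{\pi^2}n^3$, so that
\[
    \sum_{k=1}^n |\mathcal{F}_k^{r,s}| \sim \frac{s-r}{\pi^2}n^3,
\]
as desired. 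I do not anticipate a serious obstacle here, since all the analytic content is already packaged in Theorems~\ref{thm:farey2} and~\ref{thm:farey3}; the only point requiring mild care is the bookkeeping of the integer endpoints in the interval decomposition, to confirm that the overlap contributes only a bounded number of fractions and therefore leaves the leading-order asymptotics untouched.
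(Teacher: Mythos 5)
Your proposal is correct and follows essentially the same route as the paper: decompose $[r,s]$ into $s-r$ unit intervals, apply the preceding Lemma to reduce each piece to $|\mathcal{F}_n|$, and conclude from Theorems~\ref{thm:farey2} and~\ref{thm:farey3}. The only difference is cosmetic: the paper records the endpoint double-counting exactly, as $|\mathcal{F}_n^{r,s}| = (s-r)|\mathcal{F}_n|-(s-r-1)$, whereas you absorb it into an $O(1)$ term, which is equally sufficient for the asymptotics.
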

\begin{proof}
    it is possible to write $[r,s]$ as the union of adjacent unit
    intervals as $[r,s] = \bigcup_{i=0}^{s-r} [r+i-1, r+i]$. As a
    consequence and also by applying the above lemma we get
    \[
    |\mathcal{F}_n^{r,s}| = (s-r)|\mathcal{F}_n|-(s-r-1).
    \]
    The thesis follows from Theorem~\ref{thm:farey2}.
\end{proof}

\bibliographystyle{elsarticle-harv} \bibliography{compl_IM_biblio}

\end{document}